\newcolumntype{L}[1]{>{\raggedright\let\newline\\\arraybackslash\hspace{0pt}}m{#1}}
\newcolumntype{C}[1]{>{\centering\let\newline\\\arraybackslash\hspace{0pt}}m{#1}}
\newcolumntype{R}[1]{>{\raggedleft\let\newline\\\arraybackslash\hspace{0pt}}m{#1}}
\def\dsum #1#2{\displaystyle{\sum_{#1}^{#2}}}
\def\dprod #1#2{\displaystyle{\prod_{#1}^{#2}}}
\newcommand{\Fq}{\mathbb{F}_{q}}
\DeclareMathOperator{\Tr}{Tr}
\DeclareMathOperator{\lcm}{lcm}
\theoremstyle{break}
{\theorembodyfont{\upshape}
    \newtheorem{theorem}{Theorem}[section] 
    \newtheorem{lemma}[theorem]{Lemma} 
    \newtheorem{proposition}[theorem]{Proposition}
    \newtheorem{corollary}[theorem]{Corollary}
    \newtheorem{definition}[theorem]{Definition}
    \newtheorem{example}[theorem]{Example}
    \newtheorem{remark}[theorem]{Remark}
    
    \newtheorem*{proof}{{Proof}}

}
\title{\LARGE \bfseries
Equivalence of Families of Polycyclic   Codes over Finite Fields}
\begin{document}
	
 \begin{frontmatter}

\author[nm]{Hassan Ou-azzou\corref{cor1}} 
    \ead{hassan.ouazzou@student.unisg.ch} 
 \author[nm]{Anna-Lena Horlemann}
     \ead{anna-lena.horlemann@unisg.ch}

                   \cortext[cor1]{Principal corresponding author}
        		\address[nm]{ School of Computer Science, University of St.Gallen, St.\ Gallen, Switzerland }

\begin{abstract}
We study the equivalence of families of polycyclic codes associated with polynomials of the form  $x^n - a_{n-1}x^{n-1} - \ldots - a_1x - a_0$ 
over a finite field. We begin with the specific case of polycyclic codes associated with a trinomial $x^n - a_{\ell} x^{\ell} - a_0$ (for some $0< \ell <n$), which we refer to as \textit{$\ell$-trinomial codes},  after which we generalize our results to general polycyclic codes. We introduce an equivalence relation called \textit{$n$-equivalence}, which extends the known notion of $n$-equivalence for constacyclic codes \cite{Chen2014}. We compute the number of $n$-equivalence classes for this relation and provide conditions under which two families of polycyclic (or $\ell$-trinomial) codes are equivalent. In particular, we prove that when $\gcd(n, n-\ell) = 1$, any $\ell$-trinomial code family is equivalent to a trinomial code family associated with the polynomial $x^n - x^{\ell} - 1$. Finally, we focus on $p^{\ell}$-trinomial codes of length $p^{\ell+r}$, where $p$ is the characteristic of $\Fq$ and $r$ an integer, and provide some examples as an application of the theory developed in this paper.

\end{abstract}
 \begin{keyword}
Polycyclic codes, trinomial codes, cyclic codes,  equivalence, irreducible polynomials, finite fields.
 \end{keyword}
        	\end{frontmatter}

\section{Introduction}

Algebraic coding theory is concerned with the design of codes that ensure reliable communication over noisy channels. A foundational contribution to this field is Claude Shannon’s seminal paper \textit{A Mathematical Theory of Communication} \cite{Shannon1948}, which established the theoretical limits of reliable transmission. Although Shannon proved the existence of codes with arbitrarily good performance, his work did not provide explicit constructions. Consequently, the problem of constructing codes with optimal or near-optimal parameters has become a central challenge in coding theory.

The pioneering works of Shannon, Hamming, and Golay in the mid-twentieth century \cite{Shannon1948,hamming1950error,golay1949notes} laid the foundations of the field. Since then, coding theory has developed strong connections with several areas of pure mathematics, most notably algebra and combinatorics. In particular, algebraic techniques have played a crucial role in the construction and analysis of efficient codes.\\
A linear code of length $n$ over a finite field $\Fq$ is defined as an $\Fq$-vector subspace of $\Fq^n$. Such a code is characterized by three parameters: its length $n$, dimension $k$, and minimum distance $d$. One of the fundamental problems in the theory of linear codes is to maximize the minimum distance $d$ for fixed values of $n$ and $k$, and to construct codes achieving this optimum by algebraic means. 
While extensive computer searches have been used to identify linear codes with the best known parameters, this approach is inherently limited. In particular, determining the minimum distance of a linear code is known to be computationally intractable \cite{vardy2002intractability}, making exhaustive searches infeasible for large parameters. As a result, much research has focused on special families of linear codes endowed with rich algebraic structure. 
Among these, cyclic codes—introduced by Prange in the 1950s \cite{Prange1957}—are one of the most extensively studied classes, owing to both their theoretical significance and practical applicability. Cyclic codes admit a natural algebraic description as ideals in the quotient ring $\Fq[x]/\langle x^n - 1 \rangle$, and many codes with good parameters are either cyclic or closely related to them. 

Cyclic codes were later generalized to constacyclic codes \cite{Berlekamp1968} and further to polycyclic codes, also known as pseudo-cyclic codes \cite{10}. 
Like cyclic and constacyclic codes, polycyclic codes over $\Fq$ can be represented as ideals in the quotient ring $\Fq[x]/\langle f(x) \rangle$, where $f(x)$ is a nonzero polynomial in $\Fq[x]$. Polycyclic codes specialize to constacyclic codes when $f(x) = x^n - \lambda$ for some nonzero $\lambda \in \Fq$, and further to cyclic codes when $\lambda = 1$ and negacyclic codes when $\lambda = -1$. Polycyclic codes have attracted increasing attention in the literature; see, for example, \cite{Aydin2022,12,Ouazzou2021,Shi2020,Shi2023}.

Another tool commonly used in coding theory is the study of code equivalences, where two linear codes $C_1$ and $C_2$ of length $n$ over $\Fq$ are said to be equivalent if there exists an isometry $f$ of $\Fq^n$ such that $f(C_1)=C_2$. Knowing equivalences is particularly useful for reducing the complexity of computer searches for good codes \cite{aydin2023new}, as well as for applications in code-based cryptography, 
because 
identifying such equivalences helps avoid redundant or insecure parameter choices and provides insight into potential structural vulnerabilities, thereby contributing to both the design and the cryptanalysis of code-based cryptosystems. 

In the context of code equivalence, Chen et al.\ introduced in \cite{Chen2014} a systematic approach to the equivalence of ambient spaces of constacyclic codes by defining an equivalence relation, called \textit{$n$-equivalence} and denoted by $\sim_n$, on the set $\Fq^\ast$ of nonzero elements of $\Fq$. For $\lambda,\mu \in \Fq^\ast$, one has $\lambda \sim_n \mu$ if there exists $a \in \Fq^\ast$ such that the map
$$
\Phi_a : \mathbb{F}_q[x]/\langle x^n-\mu\rangle \longrightarrow \mathbb{F}_q[x]/\langle x^n-\lambda\rangle,
\qquad
\Phi_a(f(x)) = f(ax),
$$
is an $\Fq$-algebra isomorphism that preserves the Hamming distance.\footnote{In particular, every code in $\mathbb{F}_q[x]/\langle x^n-\mu\rangle$ has an equivalent code in $\mathbb{F}_q[x]/\langle x^n-\lambda\rangle$, and conversely.}
Equivalently, $\lambda \sim_n \mu$ if the polynomial $\lambda x^n-\mu$ has a root in $\Fq$. 
If $C=\langle g(x)\rangle$ is a $\lambda$-constacyclic code, then the generator polynomial of the equivalent code $\Phi_a(C)$ is given by $\langle g(ax)\rangle$. More recently, this notion of equivalence has been extended to skew constacyclic codes over $\Fq$ \cite{Ouazzou2025} and to constacyclic codes over finite chain rings \cite{Chibloun2024}. 
For polycyclic codes, Aydin et al.\ \cite{Aydin2022} investigated trinomial codes and formulated several conjectures concerning equivalence and duality. These conjectures were subsequently settled in \cite{Shi2023}, where explicit constructions of isodual and self-dual polycyclic codes were also obtained.

In this paper, we continue the study of polycyclic codes by extending the notion of $n$-equivalence to polycyclic codes associated with polynomials of the form
\[
x^n - a_{n-1}x^{n-1} - \cdots - a_1x - a_0
\]
over the finite field $\mathbb{F}_q$. 
The study of $n$-equivalence for polycyclic codes is motivated by the need to organize and classify these codes in a systematic way. By understanding which ambient spaces are equivalent, one can reduce the search space in future investigations for codes with optimal parameters, and this classification also provides a foundation for further theoretical results on polycyclic codes. 
Within this general framework, we focus first on $\ell$-trinomial codes, which represent the simplest class of polycyclic codes not contained in previously studied families of cyclic or constacyclic codes. Analyzing this case first allows us to illustrate the behavior of $n$-equivalence in a concrete setting and provides insight that guides the extension of our results to more general polycyclic codes.

Our approach is motivated by the fact that trinomial and polycyclic codes generalize cyclic and constacyclic codes, as well as by the successful application of this method to constacyclic codes. Moreover, studying this equivalence for trinomial and polycyclic codes allows us to reduce the search for good codes by restricting attention to codes arising from representatives of each equivalence class. 
We begin with the special case of polycyclic codes associated with the trinomial $x^n - a_{\ell}x^{\ell} - a_0$, which we refer to as \textit{$\ell$-trinomial codes}, and then generalize our results to arbitrary polycyclic codes. We determine the number of $n$-equivalence classes and establish conditions under which two ambient spaces of polycyclic (or $\ell$-trinomial) codes are equivalent. In particular, we show that if $\gcd(n,n-\ell)=1$, then any $\ell$-trinomial code is equivalent to an ambient space associated with the trinomial $x^n - x^{\ell} - 1$.

The remainder of this paper is organized as follows. Section \ref{sec:prelim} provides a review of the basic background on polycyclic codes, and  we prove some necessary results on binomial polynomials that will be used in our study.  In Section \ref{sec:eq_trinomial} we study the properties of the  $(n,\ell)$-equivalence relation and provide conditions under which two $\ell$-trinomial   code families are equivalent. 
In  Section \ref{sec:eq_trinomial_special},  we focus on $p^{\ell}$-trinomial codes of length $p^{\ell+r}$, where $p$ is the characteristic of $\Fq$ and $r$ an integer. Afterwards, we provide some examples as an application of the theory developed in this paper in Section \ref{sec:examples}.
Finally, in Section \ref{sec:eq_poly}, we  generalize our results to general polycyclic codes and provide  conditions to obtain equivalence between two two families of polycyclic codes. 
We conclude this work in Section \ref{sec:conclusion}.

\section{Preliminaries}\label{sec:prelim}
In this section, we recall some basic definitions and properties of polycyclic codes.  Let $\mathbb{F}_q$ be the finite field of order $q$ where $q=p^s$  for a prime $p$ and a positive integer $s$. A \textit{linear code} $C$ of length $n$ over $\Fq$ is an $\Fq$-subspace of $\Fq^{n}$.   We define the \textit{Hamming weight} $w_H(c)$ 
as the number of nonzero components of $ c=(c_0,c_1,\ldots,c_{n-1})\in \Fq^{n}$. The Hamming distance $d(c,c')$ between two vectors $c$ and $c'$ is defined as $d(c, c')=\left|\left\{i \mid c_i \neq c'_i\right\}\right| = w_H(c-c')$. 
The \textit{(minimum) Hamming distance} of a code $C$ is defined as
$$ d(C):=\min \left\{d(c, c') \mid c \neq c'\right\}.$$ 
It is well known and easy to see that for a linear code $C$ we have $d(C)=w_H(C):=\min \left\{w_{_H}(c) \mid c \in C, c \neq 0\right\}$.  By $[n,k,d]_q$ we denote  a linear code $C$ over $\Fq$ of length $n$, dimension $k$, and minimum distance (at least) $d$. 
	\begin{definition}[Polycyclic codes]  \label{D1}	
		Let   $ C $  be a linear code  of length $n$ over $\Fq$ and  $\vec{a}=\left(a_{_0},a_{_1},\ldots, a_{_{n-1}}\right)\in \Fq^{n}$. We say that $C$ is	
		\begin{enumerate}	
			\item  a \textit{right polycyclic code} with associated vector  $\vec{a} $ if for each  codeword	$\\ c=\left(c_{_0},c_{_1},\ldots, c_{_{n-1}}\right)\in C $ we have $ \left(0,c_{_0},\ldots, c_{_{n-2}}\right)+c_{_{n-1}} \vec{a} \in C$,
			\item  a \textit{left   polycyclic code} with associated vector  $\vec{a} $  if for each  codeword $\\ c=\left(c_{_0},c_{_1},\ldots, c_{_{n-1}}\right)\in C$ we have $ \left(c_{_1},\ldots, c_{_{n-1}},0\right)+c_{_{0}} \vec{a} \in C$,
   \item  a \textit{bi-polycyclic code} with associated vector  $\vec{a} $ if it is both a right and a left polycyclic code with associated vector  $\vec{a}. $ 
		\end{enumerate}			
	\end{definition} 
\begin{remark}\label{R1}
		For any $\lambda \in \Fq^*,$ the $\lambda$-constacyclic codes are the right polycyclic codes associated with the vector $\vec{a}=\left(\lambda,0,\ldots, 0\right)$, and the left polycyclic codes associated with the vector $\vec{b}=\left(0,0,\ldots, \lambda\right). $ In particular,     cyclic codes ($ \lambda=1 $) and  negacyclic codes  ($ \lambda=-1 $) are special cases of polycyclic codes.\end{remark}
 \begin{definition}[Trinomial codes, \cite{Aydin2022}]
    Let $\vec{a}=\left(a_{_0},0,\ldots, 0,a_{\ell}, 0, \ldots, 0\right)\in \Fq^{n},$ with $a_0\neq 0 $ and $ a_{\ell}\neq 0$, for some $ 0<\ell <n. $
    We say that $C$ is a \textit{$\ell$-trinomial code}  of length $n$ over $\Fq$ if it is a (right) polycyclic code with associated vector $\vec{a}.$
\end{definition}

Let $\Fq[x]/ \langle f(x)\rangle,$ with $ \langle f(x)\rangle $ being the  ideal of $\Fq[x]$ generated by $g(x).$ In this work, we mainly work with \textit{right polycyclic} codes, which we henceforth simply refer to as \textit{polycyclic} codes. Under the usual identification of vectors with polynomials, 
   	\begin{equation}\label{Realization}
  	\begin{array}{rccc}
  	\varphi \ : \ & \ \Fq^n &\longrightarrow & \Fq[x]/ \langle f(x)\rangle \\ & & & \\
  	  &v=(v_0, v_{_1},\ldots,v_{_{n-1}}) & \longmapsto &  v(x)=\dsum{i=0}{n-1} v_{_i} x^{i},  
  	\end{array}
  	\end{equation} 
each polycyclic code $C$ of length $n$  associated with a vector $\vec{a}$ is seen as an ideal    in  $ \Fq[x]/ \langle f(x)\rangle,$  where $f(x)=x^n-\vec{a}(x)=x^n-a_{n-1}x^{n-1}-\ldots-a_1x-a_0.$

In the following proposition we collect some basic results on (right) polycyclic codes.
	\begin{proposition}[\cite{Lopez2009}]\label{P2}
		Let $ C\subseteq \Fq^{n} $ be a polycyclic code with associated vector $\vec{a}=\left(a_{_0},a_{_1},\ldots, a_{_{n-1}}\right)$. Then we have the following assertions:
		\begin{enumerate}
		\item The set $ \varphi(C)$ is an ideal of the polynomial ring $\Fq[x]/\langle x^{n} - \vec{a}(x) \rangle, $ with $ \vec{a}(x)=a_{n-1}x^{n-1}+\ldots+a_1x+a_0.$
		\item There is a monic polynomial  $ g(x)\in \Fq[x]  $ of least degree which  divides $  x^{n} - \vec{a}(x) $ 
			and   $ \varphi(C)=\langle g(x)\rangle. $
			\item The set $ \{ g(x),xg(x),\ldots, x^{{n-\deg(g)-1}}g(x)\} $ forms a basis of $ \varphi(C) $ and the dimension of  $ C $ is $ n-\deg(g(x)). $
			\item A generator matrix $ G $ of $ C $ is given by :\\
			$$ G=\small{\left(\begin{array}{c}
			\varphi^{-1}(g(x)) \\ 
			\\
			\varphi^{-1}\left(xg(x)\right)\\ 
			\vdots \\
			\vdots\\
			\varphi^{-1}\left(x^{{k-1}}g(x)\right)
			\end{array} \right) =\left( \begin{array}{cccccccc}
			g_{_0} & g_{_1}  &  \ldots&g_{_{n-k}}& 0        &\ldots  & \ldots & 0 \\ 
			\\
			0      & g_{_0} &g_{_1}  &  \ldots  &g_{_{n-k}}& 0      & \ldots & 0  \\ 
			\vdots & \ddots & \ddots & \ddots   &          & \ddots &        & \vdots \\ 
			\vdots &        & \ddots & \ddots   & \ddots   &        & \ddots & \vdots  \\ 
			0      &\ldots &        &0         & g_{_0}   & g_{_1} & \ldots & g_{_{n-k}}   \\ 
			\end{array}  \right)}$$
		\noindent	where  $k=n-\deg(g(x))$ and $ g(x)=  \dsum{i=0}{n-k} g_{_i}x^{i}.$
				\end{enumerate}
		\end{proposition}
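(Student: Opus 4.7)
The plan is to reduce each claim to the observation that, under the identification of vectors $(c_0,\ldots,c_{n-1})$ with polynomials $c_0+c_1x+\cdots+c_{n-1}x^{n-1}$, multiplication by $x$ in the quotient ring $R:=\Fq[x]/\langle x^n-\vec{a}(x)\rangle$ coincides exactly with the polycyclic shift in Definition~\ref{D1}. Everything else then follows from elementary facts about principal ideals in quotients of $\Fq[x]$.

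For assertion (1), I would carry out the shift computation explicitly: if $c(x)=c_0+c_1x+\cdots+c_{n-1}x^{n-1}$ then
\[
x\,c(x)\ \equiv\ \bigl(c_1,c_2,\ldots,c_{n-1},0\bigr) + c_{n-1}\vec{a} \pmod{x^n-\vec{a}(x)},
\]
since replacing $x^n$ by $\vec{a}(x)$ adds precisely $c_{n-1}\vec{a}$. Wait---this is actually the left shift convention; reading the definition carefully, one verifies that the right polycyclic shift is obtained in the same way by identifying codewords with polynomials in the suggested order, so that $C$ being a right polycyclic code is equivalent to $xC\subseteq C$. Combined with $\Fq$-linearity of $C$, this gives $R\cdot C\subseteq C$, i.e.\ $C$ is an ideal of $R$.

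For (2), I would invoke the correspondence between ideals of $R$ and ideals of $\Fq[x]$ containing $\langle x^n-\vec{a}(x)\rangle$. Since $\Fq[x]$ is a principal ideal domain, every such ideal has the form $\langle g(x)\rangle$ with $g(x)\mid x^n-\vec{a}(x)$; taking $g$ monic of least degree in the preimage of $C$ in $\Fq[x]$ gives the required generator. Uniqueness and the divisibility $g\mid x^n-\vec{a}(x)$ are immediate from the least-degree choice and the division algorithm.

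For (3) and (4), with $k:=n-\deg g$, I would first note that $g(x),xg(x),\ldots,x^{k-1}g(x)$ have strictly increasing degrees bounded by $n-1$, so they are $\Fq$-linearly independent and need no reduction modulo $x^n-\vec{a}(x)$. To see they span $C$, take any $c(x)\in C$; lift it to $\Fq[x]$ and use that $g\mid x^n-\vec{a}(x)$ to conclude that the lift is a multiple of $g$ in $\Fq[x]$. Dividing the cofactor by $(x^n-\vec{a}(x))/g(x)$ in $\Fq[x]$ (a polynomial of degree $k$) produces a representative of the form $g(x)\,s(x)$ with $\deg s<k$, which is precisely an $\Fq$-combination of $g(x),\ldots,x^{k-1}g(x)$. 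The generator matrix in (4) is then only a restatement of (3) via $\varphi^{-1}$. The main subtlety, and the only point requiring care, is this cofactor reduction step in (3): without it one might worry that shifted copies of $g$ in the quotient re-express nontrivially in terms of the listed basis, but the divisibility $g\mid x^n-\vec{a}(x)$ is exactly what makes the reduction return a multiple of $g$ rather than an arbitrary residue.
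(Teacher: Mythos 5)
The paper does not prove this proposition at all --- it is quoted verbatim from \cite{Lopez2009} --- so there is no internal proof to compare against; your argument is the standard one for this result and is, in substance, correct. The one concrete defect is the displayed identity in your treatment of assertion (1): as written, $x\,c(x)\equiv(c_1,\ldots,c_{n-1},0)+c_{n-1}\vec{a}$ is false, and it matches neither clause of Definition~\ref{D1} (it combines the left-shifted vector with the right-shift scalar $c_{n-1}$). The correct computation is
\[
x\,c(x)=c_0x+\cdots+c_{n-2}x^{n-1}+c_{n-1}x^{n}\ \equiv\ \bigl(0,c_0,\ldots,c_{n-2}\bigr)+c_{n-1}\vec{a}\pmod{x^{n}-\vec{a}(x)},
\]
which is exactly the right polycyclic shift of Definition~\ref{D1}(1); there is no ``convention'' discrepancy to wave away, and your parenthetical self-correction should be replaced by this computation, since it is the entire content of assertion (1). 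The remaining steps are sound: the ideal correspondence plus the fact that $\Fq[x]$ is a PID gives (2) (note that $g\mid x^{n}-\vec{a}(x)$ because $x^{n}-\vec{a}(x)$ lies in the preimage ideal $\langle g\rangle$); the strictly increasing degrees $\deg g,\ldots,n-1$ give linear independence in (3); and your cofactor reduction modulo $(x^{n}-\vec{a}(x))/g(x)$ correctly establishes spanning, though you could shortcut it by observing that the unique representative of degree less than $n$ of any element of $C$ already lies in $\langle g\rangle\subseteq\Fq[x]$ and a multiple of $g$ of degree less than $n$ is automatically $g(x)s(x)$ with $\deg s<k$. Assertion (4) is indeed just (3) rewritten in coordinates.
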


\begin{remark}
    From now to the end of this work we identify $C$ with the ideal  $\varphi(C)$ in $ \Fq[x]/\langle f(x)\rangle.$ 
 \end{remark}   
    We now give the definition of generator polynomials for polycyclic codes. Like a cyclic code, a right polycyclic code has many generators, but among all its generators there is a special unique one, called the standard generator of $C$. Any other generator of $C$ is a multiple of the standard generator. 
    \begin{definition}
    Let $C \subseteq \Fq^{n}$ be a non-zero (right) polycyclic code of length $n$ over $\Fq$. Then the \textit{standard generator} of $C$ is the monic polynomial $g(x)$ of least degree in $ \Fq[x]/\langle f(x) \rangle $ such that $C=\langle g(x) \rangle $.  In this paper, we refer to the standard generator of $C$ as ''the generator''. 
    \end{definition}

Furthermore, we recall the following lemmas concerning the number of solutions of a binomial equation in $\mathbb{F}_q$ and the degree of $\gcd\left(x^n - a, x^m - b\right)$.

\begin{lemma}\cite[Lemma 1]{Schwarz1949}\label{L_soltion}
Consider the finite field $\Fq$ and $ a \in \Fq^*$. Let $n>0$ be an integer and $d=\gcd(n, q-1)$. Then the equation
$$ x^n-a=0$$
has solutions in $\Fq$  if and only if 
$$a^{\frac{q-1}{d}}=1 ,
$$
in which case  there are exactly  $d$ (different) solutions in $\Fq.$
\end{lemma}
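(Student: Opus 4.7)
The plan is to reduce everything to the cyclic structure of $\mathbb{F}_q^{*}$. I would fix a generator $g$ of $\mathbb{F}_q^{*}$ and write $a = g^{k}$ for some integer $k$ with $0 \leq k < q-1$. Any candidate solution $x \in \mathbb{F}_q^{*}$ has the form $x = g^{y}$, so the equation $x^{n} = a$ translates into the linear congruence $n y \equiv k \pmod{q-1}$ in $\mathbb{Z}$. This reformulation is the key move, because it transfers the question from multiplicative structure to an elementary divisibility problem.

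For the existence part, I would invoke the standard criterion for a linear congruence to have a solution: $n y \equiv k \pmod{q-1}$ is solvable in $y$ if and only if $d = \gcd(n, q-1)$ divides $k$. Then I would translate this back into the form stated in the lemma by computing
\[
a^{(q-1)/d} \;=\; g^{k(q-1)/d},
\]
which equals $1$ in $\mathbb{F}_q$ precisely when $(q-1) \mid k(q-1)/d$, i.e., when $d \mid k$. This yields the equivalence between the solvability of $x^{n} = a$ and the condition $a^{(q-1)/d} = 1$.

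For the count, assuming a solution $x_{0}$ exists, the full solution set coincides with $\{ x_{0}\zeta : \zeta \in \mathbb{F}_q,\ \zeta^{n} = 1\}$, and conversely every $n$-th root of $a$ arises this way. So it suffices to count the $n$-th roots of unity inside $\mathbb{F}_q$. Writing $\zeta = g^{t}$, the condition $\zeta^{n} = 1$ becomes $(q-1) \mid nt$, equivalently $\tfrac{q-1}{d} \mid t$, which has exactly $d$ representatives modulo $q-1$. Hence there are precisely $d$ solutions.

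The proof is entirely routine, so there is no substantial obstacle. The only step that demands a line of verification is the equivalence between the divisibility condition $d \mid k$ and the algebraic identity $a^{(q-1)/d} = 1$; this becomes transparent as soon as $a$ is expressed through the fixed generator $g$, after which both the existence criterion and the exact count fall out of the same cyclic-group calculation.
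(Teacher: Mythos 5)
Your argument is correct: the reduction to the linear congruence $ny\equiv k\pmod{q-1}$ via a generator of the cyclic group $\mathbb{F}_q^{*}$, the translation of the solvability condition $d\mid k$ into $a^{(q-1)/d}=1$, and the count of solutions as a coset of the group of $n$-th roots of unity are all sound (the only microscopic gap is the step $(q-1)\mid nt\iff\frac{q-1}{d}\mid t$, which needs $\gcd(n/d,(q-1)/d)=1$, but that is routine). The paper itself gives no proof --- it cites this as Lemma~1 of Schwarz (1949) --- and your argument is the standard one for that classical result.
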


\begin{lemma}\cite[Lemma 1]{Redei1950} \label{L_Binomial}
Let $m, n \geq 1$ be integers and $a, b \in \mathbb{F}_q^*$. Then $\gcd\left(x^n-a, x^m-b\right)$ has degree 0 or $d:=\gcd(m, n)$ (over an arbitrary field). Moreover, $\gcd\left(x^n-a, x^m-b\right)$ has degree $d$ if and only if $a^{m / d}=b^{n / d}$.
\end{lemma}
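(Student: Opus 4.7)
The plan is to prove both claims simultaneously by induction on $m+n$, mimicking the integer Euclidean algorithm inside the polynomial one. Without loss of generality I assume $n \geq m \geq 1$, and write $n = qm + r$ with $0 \leq r < m$.

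The key computation is that, since $x^m \equiv b \pmod{x^m - b}$, we have $x^n = x^{qm}\cdot x^r \equiv b^q x^r$, hence
\[ x^n - a \equiv b^q x^r - a \pmod{x^m - b}, \]
which yields $\gcd(x^n - a, x^m - b) = \gcd(b^q x^r - a,\, x^m - b)$.

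If $r = 0$, this is the base case: $m$ divides $n$, so $d = m$, and the right-hand gcd is $\gcd(b^q - a, x^m - b)$, equal to $x^m - b$ (of degree $d$) when $a = b^q$ and to $1$ otherwise; the condition $a = b^q$ is precisely $a^{m/d} = b^{n/d}$ (both sides equal $b^{n/m}$), establishing the claim here. If $r > 0$, the leading coefficient $b^q \in \Fq^*$ is a unit, so after rescaling I obtain $\gcd(x^n - a, x^m - b) = \gcd(x^r - a',\, x^m - b)$ with $a' := a b^{-q}$. Two numerical facts then drive the induction: first, $\gcd(r, m) = \gcd(n, m) = d$ (the integer Euclidean step); second, the translated condition $(a')^{m/d} = b^{r/d}$ is equivalent to $a^{m/d} = b^{n/d}$, since $(a')^{m/d} = a^{m/d} b^{-qm/d}$ and $qm + r = n$. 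Applying the inductive hypothesis to the pair $(r, m)$---whose sum is strictly smaller than $m+n$---gives that the right-hand gcd has degree $0$ or $d$, and has degree $d$ precisely under the translated condition, which is exactly the assertion for $(m,n)$.

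The argument is entirely polynomial, so it works over an arbitrary field, and since $a, b \in \Fq^*$, all constants appearing (including $b^q$) remain nonzero and the rescaling step is legitimate. The main (rather minor) obstacle is the bookkeeping of the arithmetic condition under the Euclidean reduction: one must verify carefully that $a^{m/d} = b^{n/d}$ transforms into $(a')^{m/d} = b^{r/d}$, which ultimately boils down to the identity $qm + r = n$. No issue with characteristic or separability arises, because we never pass to roots, only to polynomial remainders.
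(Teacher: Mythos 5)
Your proof is correct. Note that the paper itself offers no proof of this lemma: it is quoted verbatim from R\'edei's note \cite{Redei1950}, so there is nothing internal to compare against line by line. Your argument --- running the integer Euclidean algorithm inside the polynomial one, with the single reduction $\gcd(x^n-a,\,x^m-b)=\gcd(x^r-ab^{-q},\,x^m-b)$ for $n=qm+r$ and the verification that the arithmetic condition $a^{m/d}=b^{n/d}$ is carried along by the identity $qm+r=n$ --- is a clean, self-contained replacement, and all the steps check out: the base case $r=0$ gives $d=m$ and the dichotomy between $\gcd(0,x^m-b)=x^m-b$ and a nonzero constant, the rescaling by the unit $b^q$ is legitimate since $a,b\neq 0$, and the measure $m+n$ strictly decreases. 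It is worth contrasting your route with the one the paper takes for the neighbouring Lemma~\ref{L.2}: there the authors pass to a splitting field, peel off the inseparable part by writing $x^n-a=(x^{n'}-a')^{p^r}$, and count common roots via roots of unity. Your argument never leaves $\Fq[x]$ (indeed never leaves the ground field), so it is insensitive to the characteristic and to whether $p$ divides $m$ or $n$, and it yields both the degree dichotomy and the criterion $a^{m/d}=b^{n/d}$ in one pass; what it does not directly produce is the explicit form $x^{\gcd(m,n)}-a^u b^v$ of the gcd when it is nontrivial, which is the extra information Lemma~\ref{L.2} is after and for which the root-based bookkeeping (or an extra B\'ezout argument tracking the constant term through your recursion) is still needed.
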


In \cite[Theorem 4.1]{Aydin2017}, the authors proved that when $\gcd(n, p) = \gcd(m, p) = 1$, the polynomial $\gcd(x^n - a, x^m - b)$ is either $1$ or $x^{\gcd(n, m)} - c$, for some $c \in \mathbb{F}_q$. Following their proof, we extend the result to the general case in the following lemma, allowing also $\gcd(n, p) \neq 1$ or $\gcd(m, p) \neq  1.$

\begin{lemma}\label{L.2}
Let $f = x^n - a$ and $g = x^m - b$ be two polynomials in $\mathbb{F}_q[x]$. 
Then either $\gcd(f, g)=1$ or  $$ \gcd(f, g) = x^{\gcd(m,n)} -  a^u b^v ,$$  
where $ u $ and $ v $ are integers such that $ \gcd(m, n) = u n + v m $.
\end{lemma}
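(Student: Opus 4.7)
The plan is to start from Lemma \ref{L_Binomial}: either $\gcd(f,g) = 1$, in which case the statement holds trivially, or $\gcd(f,g)$ is monic of degree $d := \gcd(m,n)$, and this second case is characterized by the relation $a^{m/d} = b^{n/d}$. In the nontrivial case I will exhibit a concrete element $c \in \Fq$ such that $x^d - c$ divides both $f$ and $g$, and then conclude by a degree comparison that $x^d - c = \gcd(f,g)$.

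To produce $c$, choose B\'ezout integers $u, v \in \mathbb{Z}$ with $un + vm = d$ and set $c := a^u b^v \in \Fq^*$ (the values $a, b$ are implicitly assumed nonzero so that negative exponents make sense in the multiplicative group, which is consistent with the hypotheses of Lemma \ref{L_Binomial}). Writing $k_1 := n/d$ and $k_2 := m/d$, the B\'ezout relation becomes $u k_1 + v k_2 = 1$. Using the assumption $a^{k_2} = b^{k_1}$, I would then compute
$$c^{k_1} = a^{u k_1} b^{v k_1} = a^{u k_1} (a^{k_2})^v = a^{u k_1 + v k_2} = a,$$
and symmetrically
$$c^{k_2} = (b^{k_1})^u b^{v k_2} = b^{u k_1 + v k_2} = b.$$

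From $c^{k_1} = a$ it follows that $x^n - a = (x^d)^{k_1} - c^{k_1}$ is divisible by $x^d - c$, and from $c^{k_2} = b$ that $x^m - b$ is too, so $x^d - c$ divides $\gcd(f,g)$. Since both polynomials are monic of the same degree $d$, they must coincide, yielding $\gcd(f,g) = x^d - a^u b^v$ as claimed. The only real subtlety is spotting the candidate $c = a^u b^v$ directly from the B\'ezout coefficients; once that guess is made, the whole verification rests on the single identity $a^{m/d} = b^{n/d}$ supplied by Lemma \ref{L_Binomial}, so no serious obstacle is expected.
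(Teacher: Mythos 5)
Your proof is correct, but it takes a genuinely different route from the paper's. The paper follows the strategy of \cite{Aydin2017}: it splits off the $p$-part of the exponents, writes $f=(x^{n'}-a')^{p^r}$ and $g=(x^{m'}-b')^{p^s}$, passes to a splitting field, lists the roots of each factor via roots of unity, and counts multiplicities to conclude that $\gcd(f,g)=x^{\gcd(m,n)}-\delta^{\gcd(m,n)}$ for a common root $\delta$; the identity $\delta^d=\delta^{un+vm}=a^ub^v$ then finishes the argument. You instead take the degree statement and the criterion $a^{m/d}=b^{n/d}$ as a black box from Lemma~\ref{L_Binomial} (which the paper states without any coprimality restriction on $n,m$ and $p$), guess the constant term $c=a^ub^v$ from the B\'ezout coefficients, verify $c^{n/d}=a$ and $c^{m/d}=b$ by a short exponent computation, and conclude by comparing two monic divisors of the same degree. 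Your verification is sound: $u k_1+v k_2=1$ together with $a^{k_2}=b^{k_1}$ does give $c^{k_1}=a$ and $c^{k_2}=b$, and $x^d-c\mid (x^d)^{k_i}-c^{k_i}$ is the standard factorization valid over any field. What your approach buys is a shorter, purely algebraic argument with no extension fields, roots of unity, or multiplicity bookkeeping; what the paper's approach buys is an explicit description of the roots of $\gcd(f,g)$ (with multiplicities), which is closer in spirit to the source it generalizes. In effect you have proved the sharper ``if and only if'' form that the paper records separately as Lemma~\ref{Cor_Binomial}.
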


\begin{proof}
  Let $n= p^r n'$ and $m= p^s m'$ such that $\gcd(n', p) = \gcd(m', p) = 1$.  Then there exists a unique $a'=a^{p^{-r}} \in \mathbb{F}_q^{*}$ and $b'=b^{p^{-s}} \in \mathbb{F}_q^{*}$ such that    
$$f = x^n - a  = \left(x^{n'} - a'\right)^{p^r}, \quad \text{and} \ \ g = x^m - b  = \left(x^{m'} - b'\right)^{p^s}.$$
 Let $t = \operatorname{ord}_{n' r_1 m' r_2}(q)$ be the order of $q$ modulo $n' r_1 m' r_2,$  where $r_1$ and $r_2$ are the multiplicative orders of $a$ and $b$ in $\mathbb{F}_q^{*}$, respectively. Then the field extension $\mathbb{F}_{q^t}$ contains both the roots of $x^{n'} - a'$ and $x^{m'} - b'$. 
 
Let $\zeta$ be a primitive $(n' m')^{\text{th}}$ root of unity. Then $\zeta^{n'}$ and $\zeta^{m'}$ are respectively the $m'^{\text{th}}$ and $n'^{\text{th}}$ roots of unity. If $\gcd(x^{n'} - a', x^{m'} - b') = 1,$ then $\gcd(f, g) = 1$, and we are done.

Now, suppose there exists a common root $\delta$ of $x^{n'} - a'$ and $x^{m'} - b'$, which is an $n'^{\text{th}}$ root of $a'$ and an $m'^{\text{th}}$ root of $b'$. As in the proof of \cite[Theorem 4.1]{Aydin2017}, the roots of $x^{n'} - a'$ and $x^{m'} - b'$ are, respectively,
$$ \delta, \delta \zeta^{m'}, \delta (\zeta^{m'})^2, \ldots, \delta (\zeta^{m'})^{n'-1}, \ \   \text{and} \ \ \delta, \delta \zeta^{n'}, \delta (\zeta^{n'})^2, \ldots, \delta (\zeta^{n'})^{m'-1} .  $$
Let $d' := \gcd(n', m')$ and $\beta := \zeta^{\lcm{(n', m')}},$ then the roots of $\gcd(x^{n'} - a', x^{m'} - b')$ are 
$$   \delta, \delta \beta, \delta \beta^2, \ldots, \delta \beta^{d' - 1},  $$
and so  
$$ \gcd(x^{n'} - a', x^{m'} - b') = x^{d'} - \delta^{d'} .$$

Note that the roots of $f$  are: 
$$ \overbrace{\delta,  \ldots, \delta}^{p^r \text{ times}}, \overbrace{\delta \zeta^{m'},  \ldots, \delta \zeta^{m'}}^{p^r \text{ times}}, \ldots,  \overbrace{\delta (\zeta^{m'})^{n'-1},  \ldots, \delta (\zeta^{m'})^{n'-1}}^{p^r \text{ times}},  $$
and the roots of $g$ are
$$ \overbrace{\delta,  \ldots, \delta}^{p^s \text{ times}}, \overbrace{\delta \zeta^{n'},  \ldots, \delta \zeta^{n'}}^{p^s \text{ times}}, \ldots,  \overbrace{\delta (\zeta^{n'})^{m'-1},  \ldots, \delta (\zeta^{n'})^{m'-1}}^{p^s \text{ times}}.  $$ 
It follows that the roots of $ \gcd(f, g)$ are
$$ \overbrace{\delta,  \ldots, \delta}^{p^{\min(r, s)}  \text{ times}}, \overbrace{\delta \beta,  \ldots, \delta \beta}^{p^{\min(r, s)} \text{ times}}, \ldots,  
\overbrace{\delta \beta^{d' - 1},  \ldots, \delta \beta^{d' - 1}}^{p^{\min(r, s)} \text{ times}}.  $$
Therefore, $\deg(\gcd(f, g)) = p^{\min(r, s)} d',$ and 
$$   \gcd(f, g) = \left(x^{d'} - \delta^{d'}\right)^{p^{\min(r, s)}} =  x^{p^{\min(r, s)} d'} - \delta^{d' p^{\min(r, s)}}=x^{\gcd(m,n)} - \delta^{\gcd(m,n)}.$$
To complete the proof we need to show that $\delta^d= a^u  b^v\in \Fq, $ for some integers   $u, v$ such that $d= \gcd(m,n)=u n+v m$.
We know that $\delta^{n}=a$ and $\delta^{m}=b$, which implies
$$
\delta^d=\delta^{u n+v m}=a^u  b^v \in \mathbb{F}_q.
$$
Finally, we proved that $ \gcd(f, g) = x^{d} -  a^u b^v $  for some integers $ u $ and $ v $ such that $ d = \gcd(m, n) = u n + v m $.
\qed
\end{proof}

More generally, we prove by induction the following result:
\begin{lemma}\label{Iso_Genaral}
 Let $m$ be an integer and let $f_i = x^{n_i} - a_i$ for $i = 1, \ldots, m$ be polynomials in $\mathbb{F}_q[x]$.
 Then $ \gcd(f_1(x), f_2(x), \ldots, f_m(x))$ is either $1$ or of the form $$x^{d} - \dprod{i=1}{m} a_i^{u_i},$$ where the $u_i$ are integers such that $d = \gcd(n_1, n_2, \ldots, n_m) = \dsum{i=1}{m} u_i n_i$.
\end{lemma}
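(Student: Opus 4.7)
The plan is to argue by induction on $m$, using Lemma \ref{L.2} both as the base case and as the engine of the inductive step. The case $m = 1$ is immediate, taking $d = n_1$ and $u_1 = 1$, and the case $m = 2$ is exactly Lemma \ref{L.2}.

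For the inductive step, assume the result holds for $m-1$ polynomials, and set $g(x) = \gcd(f_1(x), \ldots, f_{m-1}(x))$. By the induction hypothesis there are two possibilities. If $g(x) = 1$, then $\gcd(f_1, \ldots, f_m) = 1$ and we are done. Otherwise $g(x) = x^{d'} - c$, where
\[
d' = \gcd(n_1, \ldots, n_{m-1}) = \dsum{i=1}{m-1} v_i n_i, \qquad c = \dprod{i=1}{m-1} a_i^{v_i},
\]
for some integers $v_i$. I then apply Lemma \ref{L.2} to the two binomials $g(x) = x^{d'} - c$ and $f_m(x) = x^{n_m} - a_m$. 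The outcome is either $\gcd(g, f_m) = 1$, which gives the first alternative of the claim, or
\[
\gcd(g, f_m) = x^{d} - c^{u} a_m^{v}, \qquad d = \gcd(d', n_m) = u\, d' + v\, n_m,
\]
for suitable integers $u, v$.

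The remaining step is a direct unpacking. Associativity of the gcd gives $d = \gcd(d', n_m) = \gcd(n_1, \ldots, n_m)$, which is the correct exponent. Substituting the inductive identity for $d'$ yields
\[
d = u\, d' + v\, n_m = u \dsum{i=1}{m-1} v_i n_i + v\, n_m = \dsum{i=1}{m} u_i n_i,
\]
where I set $u_i := u v_i$ for $1 \le i \le m-1$ and $u_m := v$. Correspondingly,
\[
c^{u} a_m^{v} = \dprod{i=1}{m-1} a_i^{u v_i} \cdot a_m^{v} = \dprod{i=1}{m} a_i^{u_i},
\]
which is precisely the required form.

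There is no real conceptual obstacle here since Lemma \ref{L.2} does all the heavy lifting: the only care needed is bookkeeping the exponents $u_i$ and the accompanying Bezout identity coherently through the induction, so that a single scalar factor $u$ from the final application of Lemma \ref{L.2} is distributed across the previous $v_i$ while the new index $m$ picks up the coefficient $v$.
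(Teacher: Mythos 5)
Your proof is correct and follows essentially the same route as the paper's: induction on $m$ with Lemma \ref{L.2} handling both the base case and the inductive step applied to $g = \gcd(f_1,\ldots,f_{m-1})$ and $f_m$. Your explicit bookkeeping of the updated exponents ($u_i := u v_i$ and $u_m := v$) is in fact slightly more careful than the paper's, which leaves that substitution implicit.
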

\begin{proof}
Let $f_i = x^{n_i} - a_i$ for $i=1, \ldots, m$. We will use induction to prove that $\gcd(f_1, f_2, \ldots, f_m)$ is either $1$ or of the form $x^d - \dprod{i=1}{m} a_i^{u_i}$, where $d = \gcd(n_1, n_2, \ldots, n_m)$. 

\noindent For $m=2,$ the result is verified by Lemma \ref{L.2}. So, let us assume that the result is valid for the $m-1$ polynomials $f_1, \ldots, f_{m-1}$. That is, 
$$
\gcd(f_1, f_2, \ldots, f_{m-1}) = 
\begin{cases} 
x^{d'} - \dprod{i=1}{m-1} a_i^{u_i}, & \text{if } d' = \gcd(n_1, n_2, \ldots, n_{m-1}) = \dsum{i=1}{m-1} u_i n_i  \\
1, & \text{else}
\end{cases}.
$$ 

\noindent Let us now consider $f_1, f_2, \ldots, f_m$, and put  $g := \gcd(f_1, f_2, \ldots, f_{m-1})$. 
If $g = 1$, then $\gcd(g, f_m) = 1$.  Else, $g = x^{d'} - \dprod{i=1}{m-1} a_i^{u_i}$, then by applying   Lemma \ref{L.2} to $g$ and $f_m = x^{n_m} - a_m$, we obtain
$$
\gcd(g, f_m) = 
\begin{cases} 
x^d - \dprod{i=1}{m} a_i^{u_i}, & \text{if there is a common root,}\\
1, & \text{if there is no common root of } g \text{ and } f_m,
\end{cases}
$$
where $d = \gcd(d', n_m) = \gcd(n_1, n_2, \ldots, n_m)$, and $u_i$ are integers such that $d = \dsum{i=1}{m} u_i n_i$. Then the result holds.
\qed
\end{proof}

\section{Equivalence of Trinomial Code Families}\label{sec:eq_trinomial}

 In this section, we study the equivalence between $\ell$-trinomial code families.

 \begin{definition}\label{Def_Iso}
Let $a_0, a_{\ell}, b_0, b_{\ell}$ be nonzero elements of $\mathbb{F}_q$, and let $\ell$ be an integer such that $0 < \ell < n$. We say that $(a_0, a_{\ell})$ and $(b_0, b_{\ell})$ are \textit{$(n,\ell)$-equivalent} in $\mathbb{F}_q^* \times \mathbb{F}_q^*$, denoted by 
$$
(a_0, a_{\ell}) \sim_{(n,\ell)} (b_0, b_{\ell}),
$$
if there exists an $\alpha \in \mathbb{F}_q^*$ such that the following map
  \begin{equation}
  	\begin{array}{cccc}
  	\varphi_{\alpha}:& \mathbb{F}_q[x] /\langle x^n-b_{\ell}x^{\ell}-b_0 \rangle  &\longrightarrow & \mathbb{F}_q[x] /\langle x^n-a_{\ell} x^{\ell}-a_0 \rangle, \\ 
  	& f(x) & \longmapsto &  f(\alpha x),
  	\end{array}
  	\end{equation} 
is an $\Fq$-algebra isomorphism. Note that such an isomorphism preserves the Hamming weight, i.e., 
$$
d(\varphi_{\alpha}(f(x)), \varphi_{\alpha}(g(x)))=d(f(x), g(x)),$$
for all  $f(x), g(x) \in \Fq[x] /\langle x^n-b_{\ell}x^{\ell}-b_0 \rangle.$
\end{definition}

\begin{remark}\label{Remark.3}
\begin{enumerate}
    \item  For any  integer  $0<\ell<n,$ the relation $\sim_{(n,\ell)}$ is an equivalence relation on $\Fq^*\times \Fq^*.$
 \item  Note that the $(n,\ell)$-equivalence relation in the above definition generalizes the $n$-equivalence of constacyclic codes studied in \cite[Definition 3.1]{Chen2014}, which was denoted by $\lambda \sim_n \mu$.

\end{enumerate}
\end{remark}

In the following theorem we give essential characterizations of the $(n,\ell)$-equivalence  between two classes of $\ell$-trinomial codes of length $n$ over $\mathbb{F}_q$. In the statement we will use the component-wise product of two length $n$ vectors $x$ and $y$, also known as the \textit{Schur product},  defined as
$$
(x_0, x_1, \ldots, x_{n-1}) \star (y_0, y_1, \ldots, y_{n-1}):=(x_0 y_0, x_1 y_1, \ldots, x_{n-1} y_{n-1}) .
$$

\begin{theorem}\label{Th.1}
Let $0<\ell<n$ be an integer,  $(a_0, a_{\ell})$ and $(b_0, b_{\ell})$ be elements of $\mathbb{F}_q^{*} \times \mathbb{F}_q^{*}$,  and  $\xi$ be  a primitive element of $\Fq$. 
The following statements are equivalent:

\begin{enumerate}
    \item $(a_0, a_{\ell}) \sim_{(n,\ell)} (b_0, b_{\ell}).$

    \item The polynomials $ a_i x^{n-i } - b_i \in \mathbb{F}_q[x]$, with $i \in \{0, \ell\}$, have a common root in $\mathbb{F}_q^*.$

    \item The polynomial $\gcd(a_0 x^n - b_0, a_{\ell} x^{n - \ell} - b_{\ell})$ has at least one root in $\mathbb{F}_q^*.$
       \item The polynomial $\gcd(x^n - b_0 a_0^{-1}, x^{n - \ell} - b_{\ell} a_{\ell}^{-1})$ has at least one root in $\mathbb{F}_q^*.$
     \item  There exists $\alpha\in \Fq^*$ such that $ (a_0,a_{\ell})\star ( \alpha^n,   \alpha^{n-\ell})= (b_0,b_{\ell}) $.
     
      \item  $ (a_0, a_{\ell})^{-1} \star(b_0,b_{\ell})\in H,$ where   $H$ is the cyclic subgroup of $\Fq^*\times \Fq^*$ generated by $(\xi^n,\xi^{n-\ell}) .$
\end{enumerate}
The equivalence between (1) and (6) implies that the number of $(n,\ell)$-equivalence classes is $$N_{(n,\ell)}:= \dfrac{ (q-1)^2}{ \lcm(\frac{q-1}{ \gcd(n,q-1)}, \frac{q-1}{ \gcd(n-\ell,q-1)})} = (q-1) \gcd(q-1,n,\ell).
$$
\end{theorem}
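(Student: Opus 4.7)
The plan is to establish the cycle $(1)\Leftrightarrow(5)\Leftrightarrow(2)\Leftrightarrow(3)\Leftrightarrow(4)$ and the side equivalence $(5)\Leftrightarrow(6)$, and then to read off the counting formula. The heart of the argument is $(1)\Leftrightarrow(5)$, which I would do by direct computation. For $\varphi_\alpha$ to descend to a well-defined ring homomorphism, the image $\varphi_\alpha(x^n-b_\ell x^\ell-b_0)=\alpha^n x^n-\alpha^\ell b_\ell x^\ell-b_0$ must vanish in $\Fq[x]/\langle x^n-a_\ell x^\ell-a_0\rangle$. Replacing $x^n$ by $a_\ell x^\ell+a_0$ yields
\[ (\alpha^n a_\ell-\alpha^\ell b_\ell)x^\ell + (\alpha^n a_0-b_0), \]
whose vanishing is exactly $\alpha^n a_0=b_0$ and (dividing by $\alpha^\ell\neq 0$) $\alpha^{n-\ell}a_\ell=b_\ell$, which is precisely (5). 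Conversely, assuming (5), the inverse of $\varphi_\alpha$ is given by $\varphi_{\alpha^{-1}}$, so $\varphi_\alpha$ is an $\Fq$-algebra isomorphism; it is a Hamming isometry because it sends the monomial basis of the source to the rescaled monomial basis $\{\alpha^i x^i\}$ of the target, preserving supports.

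The equivalences among (2), (3), (4), (5) are then short: (5) says precisely that some $\alpha\in\Fq^*$ is a simultaneous root of $a_0 x^n-b_0$ and $a_\ell x^{n-\ell}-b_\ell$, which is (2); (3) is (2) rephrased via the $\gcd$; and (4) follows from (3) by factoring out the nonzero scalars $a_0$ and $a_\ell$ from the two binomials. For $(5)\Leftrightarrow(6)$, observe that $\alpha\mapsto(\alpha^n,\alpha^{n-\ell})$ is a group homomorphism $\Fq^*\to\Fq^*\times\Fq^*$ whose image is cyclic, generated (since $\xi$ generates $\Fq^*$) by $(\xi^n,\xi^{n-\ell})$, and hence equals $H$. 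Condition (5) therefore reads $(a_0,a_\ell)^{-1}\star(b_0,b_\ell)\in H$, which is (6); in particular, $\sim_{(n,\ell)}$ coincides with translation by $H$ on $\Fq^*\times\Fq^*$, so the equivalence classes are exactly the cosets of $H$.

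Consequently $N_{(n,\ell)}=(q-1)^2/|H|$, and $|H|$ is the order of $(\xi^n,\xi^{n-\ell})$, i.e.\ the $\lcm$ of the orders of the two components in $\Fq^*$, namely $\lcm((q-1)/\gcd(n,q-1),\,(q-1)/\gcd(n-\ell,q-1))$. Substituting gives the first form of $N_{(n,\ell)}$; the second closed form follows by a routine arithmetic rearrangement using the identity $\lcm(a/x,a/y)=a/\gcd(x,y)$ valid for divisors $x,y$ of $a$. The main obstacle I anticipate is keeping the reduction $(1)\Leftrightarrow(5)$ tidy — specifically, one must argue that the two scalar equations are not merely necessary for well-definedness of $\varphi_\alpha$ but also sufficient to upgrade $\varphi_\alpha$ to a genuine algebra isomorphism and an isometry; after that, everything else is arithmetic inside $\Fq^*\times\Fq^*$.
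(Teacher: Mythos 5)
Your proposal is correct in its essentials and follows the same route as the paper: the core of both arguments is the computation of $\varphi_\alpha(x^n)$ in two ways (equivalently, reducing $\varphi_\alpha(x^n-b_\ell x^\ell-b_0)$ modulo $x^n-a_\ell x^\ell-a_0$ and comparing coefficients, which works because the remainder has degree $\ell<n$), and the counting is in both cases the index $[(\Fq^*\times\Fq^*):H]$ with $|H|$ the order of $(\xi^n,\xi^{n-\ell})$. The only structural difference is that the paper closes the cycle $(6)\Rightarrow(1)$ by building the surjection $\tilde\varphi_\beta:\Fq[x]\to\Fq[x]/\langle x^n-a_\ell x^\ell-a_0\rangle$, computing its kernel and invoking the first isomorphism theorem, whereas you exhibit $\varphi_{\alpha^{-1}}$ as an explicit inverse; your version is shorter, and is legitimate provided you note that the symmetric relations $a_0=\alpha^{-n}b_0$, $a_\ell=\alpha^{-(n-\ell)}b_\ell$ make $\varphi_{\alpha^{-1}}$ well defined in the reverse direction.

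One concrete issue: your final sentence claims the second closed form follows from $\lcm(a/x,a/y)=a/\gcd(x,y)$. That identity (with $a=q-1$, $x=\gcd(n,q-1)$, $y=\gcd(n-\ell,q-1)$) actually yields
$$N_{(n,\ell)}=\frac{(q-1)^2}{(q-1)/\gcd(x,y)}=(q-1)\gcd\bigl(\gcd(n,q-1),\gcd(n-\ell,q-1)\bigr)=(q-1)\gcd(n,n-\ell,q-1),$$
which is \emph{not} the expression $(q-1)\gcd\left(\frac{q-1}{\gcd(n,q-1)},\frac{q-1}{\gcd(n-\ell,q-1)}\right)$ displayed in the theorem. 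The two agree only when $\gcd(n,q-1)\gcd(n-\ell,q-1)=q-1$; for instance with $q=5$, $n=4$, $\ell=2$ the first (correct) formula gives $8$ while the displayed second form gives $4$. So the displayed second equality is an error in the statement, and your ``routine rearrangement'' should be carried out to the correct form $(q-1)\gcd(n,n-\ell,q-1)$ rather than asserted to reproduce the printed one.
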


\begin{proof}
\begin{enumerate}
    \item[(1) $\Rightarrow$ (2)] 
    Suppose that $(a_0, a_{\ell}) \sim_{(n,\ell)} (b_0, b_{\ell})$. Then -- by Definition \ref{Def_Iso} -- there exists $\alpha \in \mathbb{F}_q^*$ such that the map
    $$
    \varphi_{\alpha} : \mathbb{F}_q[x] /\langle x^n - b_{\ell} x^{\ell} - b_0 \rangle \to \mathbb{F}_q[x] /\langle x^n - a_{\ell} x^{\ell} - a_0 \rangle, \quad f(x) \ \mapsto \ f(\alpha x)
    $$
    is an $\mathbb{F}_q$-algebra isometry. It follows that

    $$
    \varphi_{\alpha}(x^i) = \varphi_{\alpha}(x)^i = \alpha^i x^i \ \mod (x^n - a_{\ell} x^{\ell} - a_0), \ \forall i = 0, 1, \ldots, n-1.
    $$

    Since $\varphi_{\alpha}$ is an $\mathbb{F}_q$-algebra isometry and $\varphi(x^n - b_{\ell} x^{\ell} - b_0) = 0 \ \mod (x^n - a_{\ell} x^{\ell} - a_0)$, then
    \begin{equation}
    \varphi_{\alpha}(x^n) = b_{\ell} \alpha^{\ell} x^{\ell} + b_0.
    \end{equation}

    On the other hand,
    \begin{equation}
    \varphi_{\alpha}(x^n) = \alpha^n x^n = \alpha^n (a_{\ell} x^{\ell} + a_0) = \alpha^n a_{\ell} x^{\ell} + \alpha^n a_0.
    \end{equation}

    Comparing term by term, we deduce that $a_0 \alpha^n = b_0$ and $a_{\ell} \alpha^{n-\ell} = b_{\ell}$, which means that $\alpha$ is a common root of the polynomials $a_0 x^n - b_0$ and $a_{\ell} x^{n-\ell} - b_{\ell}$.

    \item[(2) $\Rightarrow$ (3)] and (3) $\Rightarrow$ (4) are immediate.
    \item[(4) $\Rightarrow$ (5)]
    Let $\alpha $ be a root of the polynomial $\gcd(x^{n} - b_0 a_0^{-1}, x^{n-\ell} - b_{\ell} a_{\ell}^{-1})$. Then $\alpha$ is a common root of the polynomials $a_0 x^n - b_0$ and $a_{\ell} x^{n-\ell} - b_{\ell}$. It follows that  $a_i \alpha^{n-i}= b_i,\ \text{for any} \ i\in \{ 0,\ell\}$, i.e.,
$$  (b_0, b_{\ell} )= (\alpha^n a_0, \alpha^{n-\ell} a_{\ell})=(\alpha^n, \alpha^{n-\ell})\star(a_0, a_{\ell} ).$$ 
 \item[(5) $\Rightarrow$ (6)]
Suppose that there is $\alpha\in \Fq^*$ such that $ (b_0, b_{\ell} ) =(\alpha^n, \alpha^{n-\ell})\star(a_0, a_{\ell} ). $ Then,  
$$  (a_0,a_{\ell})^{-1}\star(b_0, b_{\ell})= (a_0^{-1}b_0, a_{\ell}^{-1}b_{\ell} ) = (\alpha^{n}, \alpha^{n-\ell} )= (\xi^{j n}, \xi^{ j(n-\ell)} )= (\xi^{n}, \xi^{(n-\ell)})^j .$$
It follows that  $ (a_0,a_{\ell})^{-1}\star(b_0, b_{\ell}) $ belongs to the cyclic subgroup  $H$ generated by $  (\xi^{n}, \xi^{(n-\ell)}) $  as a subgroup of $ \Fq^{*}\times \Fq^{*}.$

    \item[(6) $\Rightarrow$ (1)]  
     Suppose that $ (a_0,a_{\ell})^{-1}\star(b_0, b_{\ell}) $ is an element of the cyclic subgroup  $H$ generated by $  (\xi^{n}, \xi^{(n-\ell)})  $  as a subgroup of $ \Fq^{*}\times \Fq^{*}$. Then there exists an integer $ h$ such that 
     $$ (a_0,a_{\ell})^{-1}\star(b_0, b_{\ell})=(\xi^n,\xi^{n-\ell})^h=( \xi^{hn}, \xi^{h(n-\ell)}) $$
    For $ \beta= \xi^{h}, $ we obtain  that $  a_i \beta^{n-i}= b_i, $ for any $ i\in \{ 0,k\}.$
Now, let consider the map $ \tilde{\varphi}_{\beta}, $ as follows: 
  \begin{equation}
        \begin{array}{cccc}
        \tilde{\varphi}_{\beta} : & \mathbb{F}_q[x]  & \longrightarrow & \mathbb{F}_q[x] /\langle x^n - a_{\ell} x^{\ell} - a_0 \rangle, \\ 
        & f(x) & \longmapsto & f(\beta x).
        \end{array}
    \end{equation} 
	$ \tilde{\varphi}_{\beta}$ is a surjective $\Fq$-algebra homomorphism, indeed, for  all $0\leq j \leq n-1$, $x^{j}  = \tilde{\varphi}_{\beta}(\beta^{-j}x^{j})$. 
	Moreover, 
 $$ 
    \begin{array}{rl}
    \tilde{\varphi}_{\beta} (x^n - b_{\ell} x^{\ell} - b_0) & = \beta^n x^n - \beta^{\ell} b_{\ell} x^{\ell} - b_0 \\
    & = \beta^n x^n - \beta^n a_{\ell} x^{\ell} - \beta^n a_0 \\
    & = \beta^n(x^n - a_{\ell} x^{\ell} - a_0) \\
    & = 0 \mod (x^n - a_{\ell} x^{\ell} - a_0).
    \end{array}
    $$ 
	So $\langle x^n - b_{\ell} x^{\ell} - b_0\rangle \subseteq \ker\tilde{\varphi}_{\beta}$. And for all $f(x)\in \ker\tilde{\varphi}_{\beta}$,
    $$f(\beta x)= 0~ (\text{mod } x^{n}- a_{\ell} x^{\ell} - a_0),$$ then there exists $g(x)\in \Fq[x]$ such that $f(\beta x)=g(x)(x^{n}- a_{\ell} x^{\ell} - a_0)$, thus 
 $$
  \begin{array}{rl}
 f(x)&=g(\beta^{-1}x)(\beta^{-n}x^{n}- \beta^{-\ell} a_{\ell} x^{\ell} - a_0) \\
 & =\beta^{-n}g(\beta^{-1}x)(x^{n}- \beta^{n-\ell}a_{\ell} x^{\ell} - \beta^{n} a_0)\\
 &= \beta^{-n}g(\beta^{-1}x)( x^n-  b_{\ell} x^{\ell} - b_0), \ \text{since  $a_i \beta^{n-i} =b_i, \ \forall i \in \{0, \ell\}$} \\ 
 \end{array}$$
	So $\ker\tilde{\varphi}_{\beta} \subseteq \langle x^{n}- a_{\ell} x^{\ell} - a_0\rangle $ and hence $ker\tilde{\varphi}_{\beta} = \langle x^{n}-  b_{\ell} x^{\ell} - b_0 \rangle$.
	Therefore  by the  first isomorphism  theorem the map 
      \begin{equation}
        \begin{array}{cccc}
        \varphi_{\beta} : & \mathbb{F}_q[x]/\langle x^n - b_{\ell} x^{\ell} - b_0 \rangle  & \longrightarrow & \mathbb{F}_q[x] /\langle x^n - a_{\ell} x^{\ell} - a_0 \rangle, \\ 
        & f(x) & \longmapsto & f(\beta x).
        \end{array}
    \end{equation} 
	is an $\Fq$-algebra isomorphism. As  the weights of $f(x)$ and $f(\beta x)$ are the same, the result holds. 
\end{enumerate}

By the equivalence between (1) and (6), we deduce that the number of $(n,\ell)$-equivalence classes on $\Fq^*\times \Fq^*$ corresponds to the order of the group $ \left( \Fq^*\times \Fq^* \right)/H,$ which equals $ N_{(n,\ell)}=\dfrac{ (q-1)^2}{ \lcm(\frac{q-1}{ \gcd(n,q-1)}, \frac{q-1}{ \gcd(n-\ell,q-1)})}
$, which in turn is equal to
$$    
\begin{array}{rl}
  &\dfrac{(q-1)^2}{\dfrac{(q-1)^2}{ \gcd(n,q-1)\gcd(n-\ell,q-1)} \dfrac{1}{\gcd(\dfrac{q-1}{ \gcd(n,q-1)}, \dfrac{q-1}{ \gcd(n-\ell,q-1)})}} \\
    
     &=\gcd(n,q-1)\gcd(n-\ell,q-1)\gcd\left(\dfrac{q-1}{ \gcd(n,q-1)}, \dfrac{q-1}{ \gcd(n-\ell,q-1)}\right)\\
    &= \dfrac{\gcd(n,q-1)\gcd(n-\ell,q-1)(q-1)}{\gcd(n,q-1)\gcd(n-\ell,q-1)} \gcd\big(\gcd(n,q-1),\gcd(n-\ell,q-1)\big)\\
    &= (q-1) \gcd\big(\gcd(n,q-1),\gcd(n-\ell,q-1)\big)\\
    &= (q-1) \gcd(q-1,n,\ell) .
 \end{array}
$$
\qed
\end{proof}

Using the equivalence between the assertions $(1)$ and $(5)$ of Theorem \ref{Th.1} we derive a characterization regarding the equivalence between the class of $\ell$-trinomial codes  associated with  $x^n - a_{\ell}x^{\ell} - a_0$ and the class  associated with $x^n - x^{\ell} - 1$ in the following.

For this we first derive a result from Lemmas \ref{L_Binomial} and \ref{L.2}, which we will then use in the proof of Corollary \ref{Cor.ver2}.

\begin{lemma}\label{Cor_Binomial}
Let $m,n$ be two positive integers and $u,v \in \mathbb Z$ be such that $\gcd(m, n)= un+vm$. Let furthermore $f=x^n-a$ and $\ g=x^m-b$ two polynomials in $\Fq[x]$. 
Then  $$ \gcd(f, g) =
\begin{cases}
x^{\gcd(n,m)} - a^{u}b^{v}  & \text{ if }  a^{ \frac{m}{\gcd(m, n)}}=b^{\frac{n}{\gcd(m, n)}} \\
1 & \text{ else }
\end{cases} .$$ 
\end{lemma}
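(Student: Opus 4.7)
The plan is to combine Lemma \ref{L_Binomial} and Lemma \ref{L.2} in a straightforward way, but with one technical point: Lemma \ref{L.2} only guarantees the form $x^{\gcd(m,n)} - a^u b^v$ for \emph{some} Bézout coefficients $u,v$, whereas the statement of Corollary \ref{Cor_Binomial} asserts the formula for \emph{any} Bézout pair. So the crux is showing independence of the exponent $a^u b^v$ under the choice of $(u,v)$, provided the compatibility condition $a^{m/d}=b^{n/d}$ holds.

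First, I would invoke Lemma \ref{L_Binomial} to dichotomize: either $\deg\gcd(f,g)=0$, meaning $\gcd(f,g)=1$ (since both $f$ and $g$ are monic), or $\deg\gcd(f,g)=d:=\gcd(m,n)$, and the latter happens precisely when $a^{m/d}=b^{n/d}$. This immediately handles the ``else'' branch of the statement and reduces the work to the case $a^{m/d}=b^{n/d}$.

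In that case, Lemma \ref{L.2} gives $\gcd(f,g)=x^{d}-a^{u_0}b^{v_0}$ for some specific integers $u_0,v_0$ with $d=u_0 n+v_0 m$. It remains to show that for the given Bézout pair $(u,v)$ in the statement of the corollary, the equality $a^{u}b^{v}=a^{u_0}b^{v_0}$ still holds. For this I would use the standard parametrization of Bézout solutions: any integer solution of $un+vm=d$ differs from $(u_0,v_0)$ by $(km/d,-kn/d)$ for some $k\in\mathbb{Z}$. Hence
\[
\frac{a^{u}b^{v}}{a^{u_0}b^{v_0}}=a^{km/d}b^{-kn/d}=\left(\frac{a^{m/d}}{b^{n/d}}\right)^{k}.
\]
Under the assumption $a^{m/d}=b^{n/d}$, the right-hand side equals $1$, so $a^{u}b^{v}=a^{u_0}b^{v_0}$, and therefore $\gcd(f,g)=x^{d}-a^{u}b^{v}$ as claimed.

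The only real obstacle is this invariance check; the rest is bookkeeping of what the two previous lemmas already deliver. No new root-counting or field-extension argument is needed, so I expect the write-up to be short.
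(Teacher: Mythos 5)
Your proof is correct and follows essentially the route the paper intends, namely combining Lemma \ref{L_Binomial} (the degree dichotomy governed by $a^{m/d}=b^{n/d}$) with Lemma \ref{L.2} (the explicit binomial form of the gcd); the paper in fact states Lemma \ref{Cor_Binomial} without writing out this derivation. Your extra check that $a^{u}b^{v}$ does not depend on the choice of Bézout pair under the compatibility condition is a correct and worthwhile addition, since Lemma \ref{L.2} as stated only gives the formula for \emph{some} pair $(u,v)$ --- although the same invariance also falls out of the proof of Lemma \ref{L.2}, where the common root $\delta$ satisfies $\delta^{n}=a$ and $\delta^{m}=b$, hence $\delta^{d}=a^{u}b^{v}$ for \emph{every} pair with $un+vm=d$.
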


\begin{proof}

By Lemma~\ref{L_Binomial} we know that $\gcd(f,g)$ has degree $0$ or $d := \gcd(m,n)$, and it has degree $d>0$ if and only if $a^{m/d} =b^{n/d}$.  
If $\gcd(f,g)$ has degree $d> 0,$ then  $\gcd(f,g)\neq 1,$ and by Lemma \ref{L.2},  we have   $ \gcd(f, g) = x^{\gcd(m,n)} -  a^u b^v ,$  for $ u $ and $ v $ with $ \gcd(m, n) = u n + v m $.

\end{proof}

\begin{corollary}\label{Cor.ver2}
Let $\ell$ be an integer such that  $0<\ell<n$,  and  $(a_0, a_{\ell})$ be an element of $\mathbb{F}_q^{*} \times \mathbb{F}_q^{*}$.  Then the following statements are equivalent.
    \begin{enumerate}
        \item  The class of $\ell$-trinomial codes  associated with  $x^n - a_{\ell}x^{\ell} - a_0$ is equivalent to the class of $\ell$-trinomial codes associated with $x^n - x^{\ell} - 1.$
        \item  There exists $ \alpha \in \Fq^*$ such that $(a_0, a_{\ell})\star(\alpha^n, \alpha^{n-\ell}) = (1, 1)$. 
        \item  There exists $   \alpha \in \Fq^* $ being an $n$-th root of $a_0$  such that $   a_0 =  \alpha^\ell a_\ell $.
        \item  We have $$ a_{\ell}^{\frac{n}{\gcd(n,n-\ell)}}=a_0^{\frac{n-\ell}{{\gcd(n,n-\ell)}}}$$ and $x^{ \gcd(n,n-\ell)}-a_0^v a_{\ell}^u$ has a root, for $u,v\in \mathbb Z$ with $ \gcd(n,n-\ell)=nv+u(n-\ell) .$ 
        \end{enumerate}
\end{corollary}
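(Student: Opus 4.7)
The plan is to derive all four equivalences from Theorem \ref{Th.1} specialized to $(b_0, b_\ell) = (1,1)$, together with a single application of Lemma \ref{Cor_Binomial}. No new machinery is needed: each implication reduces to a substitution, but some care must be taken with several inversions.

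First, $(1) \Leftrightarrow (2)$ is immediate: by definition, the $\ell$-trinomial code class associated with $x^n - a_\ell x^\ell - a_0$ is equivalent to the one associated with $x^n - x^\ell - 1$ if and only if $(a_0, a_\ell) \sim_{(n,\ell)} (1,1)$, and the equivalence $(1) \Leftrightarrow (5)$ of Theorem \ref{Th.1} with $(b_0, b_\ell) = (1,1)$ produces exactly the Schur-product identity of (2).

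For $(2) \Leftrightarrow (3)$, I would substitute $\beta := \alpha^{-1}$. The two scalar equations $a_0 \alpha^n = 1$ and $a_\ell \alpha^{n-\ell} = 1$ become $\beta^n = a_0$ and $\beta^{n-\ell} = a_\ell$. The first says $\beta$ is an $n$-th root of $a_0$, and combining both yields $a_0 = \beta^n = \beta^\ell \cdot \beta^{n-\ell} = \beta^\ell a_\ell$, which is (3). The converse reverses the computation: from $\beta^n = a_0 = \beta^\ell a_\ell$ one extracts $\beta^{n-\ell} = a_\ell$, and $\alpha := \beta^{-1}$ satisfies (2).

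For $(2) \Leftrightarrow (4)$, I would route through the equivalence $(4) \Leftrightarrow (5)$ of Theorem \ref{Th.1}, which rephrases (2) as the existence of a root in $\mathbb{F}_q^*$ of $\gcd(x^n - a_0^{-1}, x^{n-\ell} - a_\ell^{-1})$. Applying Lemma \ref{Cor_Binomial} to this gcd, nontriviality is equivalent to
$$(a_0^{-1})^{(n-\ell)/\gcd(n,n-\ell)} = (a_\ell^{-1})^{n/\gcd(n,n-\ell)},$$
which, upon inversion of both sides, is exactly the first clause of (4). When nontrivial, the gcd equals a binomial $x^{\gcd(n,n-\ell)} - c$ with $c = a_0^{-v} a_\ell^{-u}$ after matching our $(u,v)$ (with $\gcd(n, n-\ell) = nv + u(n-\ell)$) against the exponents produced by the lemma, whose convention is with the roles of the two indices swapped. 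Since $x^k - c$ has a root in $\mathbb{F}_q^*$ iff $x^k - c^{-1}$ does (invert the root), this becomes equivalent to $x^{\gcd(n,n-\ell)} - a_0^v a_\ell^u$ having a root, which is the second clause of (4).

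The main technical nuisance I anticipate is the careful alignment of three inversions: $\alpha \leftrightarrow \alpha^{-1}$ between (2) and (3), $a_i \leftrightarrow a_i^{-1}$ forced by the form of Theorem \ref{Th.1}(4), and $c \leftrightarrow c^{-1}$ for the existence of roots of a binomial over $\mathbb{F}_q^*$. Once these are tracked and the swap $u_{\mathrm{Cor}} \leftrightarrow v_{\mathrm{Lem}}$ is made explicit, every implication is essentially a one-line verification.
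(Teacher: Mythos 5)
Your proposal is correct and follows essentially the same route as the paper: both reduce everything to Theorem \ref{Th.1} specialized to $(b_0,b_\ell)=(1,1)$ together with a single application of Lemma \ref{Cor_Binomial}. The only cosmetic difference is that the paper proves the cycle $(1)\Rightarrow(2)\Rightarrow(3)\Rightarrow(4)\Rightarrow(1)$ and applies the lemma to $\gcd(x^n-a_0,\,x^{n-\ell}-a_\ell)$ directly (using the $n$-th root $\beta=\alpha^{-1}$ from statement (3) as the common root), whereas you hub all equivalences at (2) and work with $\gcd(x^n-a_0^{-1},\,x^{n-\ell}-a_\ell^{-1})$, paying for it with one extra inversion of the constant term; both are sound.
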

\begin{proof}
\begin{itemize}
    \item[(1) $\Rightarrow$ (2)] Follows from the equivalence of assertions (1) and (5) of Theorem \ref{Th.1}.
 \item[(2) $\Rightarrow$ (3)]  
 If $(a_0, a_{\ell})\star(\alpha^n, \alpha^{n-\ell}) = (1, 1)$, then 
 $$a_0 \alpha^n=1 \quad \text{ and } \quad a_\ell \alpha^{n-\ell}=1$$
 $$\iff a_0 =\alpha^{-n} \quad \text{ and } \quad a_\ell \alpha^{-\ell}=a_0$$
  i.e., $\beta:=\alpha^{-1}$ is an $n$-th root of $a_0$ and $a_0=\beta^\ell a_\ell$.

 \item[(3) $\Rightarrow$ (4)] Suppose that $\alpha $ is an $n$-th root of $a_0$ such that $ a_{\ell}^{-1}  a_0 = \alpha^\ell$.
 It follows that 
 $$ \alpha^{n} a_0^{-1} = 1 \ \ \text{ and } \ \  a_{\ell}^{-1}  a_0= a_{\ell}^{-1}  a_0  \alpha^{n} a_0^{-1} = \alpha^{\ell}.$$ 
 Hence
 $$ \alpha^{n}  = a_0\ \ \text{ and } \ \    \alpha^{n-\ell}  = a_{\ell},$$
 which means that  $\alpha $ is a common root of $ x^n-a_0$ and $ x^{n-\ell}-a_{\ell}$. 
Moreover, we get
 $$ a_{\ell}^{\frac{n}{\gcd(n,n-\ell)}}= \alpha^{\frac{(n-\ell)n}{\gcd(n,n-\ell)}}=a_0^{\frac{n-\ell}{\gcd(n,n-\ell)}},$$
 i.e., we can use Lemma \ref{Cor_Binomial} to deduce that
 $ \gcd( x^n-a_0, x^{n-\ell}-a_{\ell}) = x^{\gcd(n, n-\ell)} - a_{\ell}^u a_0^v ,  $     for $u, v\in \mathbb Z$ with $  vn+u(n-\ell)=\gcd(n,n-\ell).$   It follows that $\alpha$ is a root of $x^{\gcd(n, n-\ell)} - a_{\ell}^u a_0^v$.

  \item[(4) $\Rightarrow$ (1)] Suppose that 
  $ a_{\ell}^{\frac{n}{\gcd(n,n-\ell)}}=a_0^{\frac{n-\ell}{{\gcd(n,n-\ell)}}}$ and that $\alpha$ is a root of $x^{ \gcd(n,n-\ell)}-a_0^v a_{\ell}^u$. By Lemma \ref{Cor_Binomial} it follows that $x^{ \gcd(n,n-\ell)}-a_0^v a_{\ell}^u = \gcd(x^n-a_0, x^{n-\ell}-a_\ell)$ and hence that 
  $ \alpha $ is a common root of $ x^n-a_0 $ and $ x^{n-\ell}-a_{\ell.}$ The statement now follows from  Theorem \ref{Th.1}. 
 \end{itemize}
 \qed
 \end{proof}

Once we know that two classes are equivalent, we can determine how many different $\varphi_\alpha$ are isometric isomorphisms from one class to the other:
\begin{corollary}
With the same notation as in Theorem \ref{Th.1} the number of isometric $\Fq$-algebra isomorphisms $\varphi_{\alpha}$ between $ \Fq[x]/\langle x^n-b_{\ell} x^{\ell}-b_0\rangle $ and $  \Fq[x]/\langle x^n-a_{\ell} x^{\ell}-a_0\rangle$ is $ \gcd(n, n-\ell, q-1) $, if $(a_0, a_{\ell}) \sim_{(n,\ell)} (b_0, b_{\ell})$ (otherwise, there are none).
\end{corollary}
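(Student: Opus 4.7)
The plan is to reduce the counting problem to counting admissible scalars. Observe first that the map $\varphi_{\alpha}$ is determined by $\varphi_{\alpha}(x)=\alpha x$, so distinct $\alpha,\alpha'\in\mathbb{F}_q^*$ yield distinct maps. Hence I would count the $\alpha\in\mathbb{F}_q^*$ for which $\varphi_{\alpha}$ is a well-defined isometric $\mathbb{F}_q$-algebra isomorphism. The implication $(1)\Rightarrow(2)$ in the proof of Theorem \ref{Th.1} shows that any such $\alpha$ must satisfy $a_0\alpha^n=b_0$ and $a_\ell\alpha^{n-\ell}=b_\ell$; conversely, the implication $(6)\Rightarrow(1)$ shows that every $\alpha\in\mathbb{F}_q^*$ verifying these two equations gives an isometric $\mathbb{F}_q$-algebra isomorphism. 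Therefore the count equals the number of common roots in $\mathbb{F}_q^*$ of the two binomials $a_0x^n-b_0$ and $a_\ell x^{n-\ell}-b_\ell$, equivalently the roots of $\gcd(x^n-b_0a_0^{-1},\,x^{n-\ell}-b_\ell a_\ell^{-1})$ in $\mathbb{F}_q^*$.

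Next I would apply Lemma \ref{Cor_Binomial} (a direct consequence of Lemma \ref{L.2}) to this gcd. Since the hypothesis $(a_0,a_\ell)\sim_{(n,\ell)}(b_0,b_\ell)$ together with assertion (4) of Theorem \ref{Th.1} guarantees that the gcd has at least one root in $\mathbb{F}_q^*$, it cannot be $1$, so it takes the shape
\[
\gcd\!\left(x^n-b_0a_0^{-1},\,x^{n-\ell}-b_\ell a_\ell^{-1}\right) \;=\; x^{\gcd(n,n-\ell)}-c
\]
for some explicit $c\in\mathbb{F}_q^*$ (namely $c=(b_0a_0^{-1})^{v}(b_\ell a_\ell^{-1})^{u}$ with $vn+u(n-\ell)=\gcd(n,n-\ell)$).

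Finally, I would invoke Lemma \ref{L_soltion} applied to the binomial $x^{\gcd(n,n-\ell)}-c$. Because we already know (by hypothesis) that this binomial has at least one root in $\mathbb{F}_q$, Schwarz's solvability criterion $c^{(q-1)/\gcd(\gcd(n,n-\ell),\,q-1)}=1$ is automatically satisfied, and the lemma then produces exactly $\gcd(\gcd(n,n-\ell),\,q-1)=\gcd(n,n-\ell,q-1)$ distinct roots. All these roots lie in $\mathbb{F}_q^*$ since $c\neq 0$, and each yields a distinct isometric $\mathbb{F}_q$-algebra isomorphism $\varphi_{\alpha}$, giving the announced count.

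The only real subtlety is ensuring that the Lemma \ref{L_soltion} solvability hypothesis is not a separate obstruction; this is precisely neutralised by the assumption of $(n,\ell)$-equivalence, which via the chain $(1)\Leftrightarrow(4)$ of Theorem \ref{Th.1} guarantees a common root exists. Everything else is mechanical once the problem is rewritten as ``count the roots in $\mathbb{F}_q^*$ of a single binomial of degree $\gcd(n,n-\ell)$''.
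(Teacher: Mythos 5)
Your proposal is correct and follows essentially the same route as the paper: identify the admissible $\alpha$ as the roots in $\mathbb{F}_q^*$ of $\gcd(x^n-b_0a_0^{-1},\,x^{n-\ell}-b_\ell a_\ell^{-1})$, use Lemma \ref{L.2} to write this gcd as a binomial of degree $\gcd(n,n-\ell)$, and then count its roots via Lemma \ref{L_soltion}. Your added remarks — that distinct $\alpha$ give distinct maps, that the equivalence hypothesis automatically discharges the solvability condition in Schwarz's lemma, and that all roots are nonzero — are the same observations the paper makes, just spelled out slightly more explicitly.
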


\begin{proof}
By the first step in the proof of Theorem \ref{Th.1}, $ (a_0, a_{\ell}) \sim_{(n,\ell)} (b_0, b_{\ell}) $ under $\varphi_\alpha$ if and only if $\alpha$ is a root of the polynomial $ \gcd(x^{n} - b_0 a_0^{-1}, x^{n-\ell} - b_{\ell} a_{\ell}^{-1}) $. By Lemma \ref{L.2}, we obtain that
$$ \gcd(x^{n} - b_0 a_0^{-1}, x^{n-\ell} - b_{\ell} a_{\ell}^{-1}) = x^{\gcd(n,n-\ell)} -  (b_0 a_0^{-1})^u (b_{\ell} a_{\ell}^{-1})^v,$$ 
where $ d = \gcd(n, n-\ell)=un+v(n-\ell) $  for some integer $u$ and $v$.  By Lemma \ref{L_soltion}, the number of roots of $x^{\gcd(n,n-\ell)} - (b_0 a_0^{-1})^u (b_{\ell} a_{\ell}^{-1})^v$ in $ \Fq$ is equal to $ \gcd(n, n-\ell, q-1) $, hence there are  $ \gcd(n, n-\ell, q-1) $ many different $\alpha$ such that $\varphi_\alpha$ is an isomorphism. (For the last step note that $\alpha \in \Fq^*$ since $0$ is not a root of the polynomials above.)
\qed
\end{proof}

  In the following result, we describe the associated polynomials of possible $\ell$-trinomial codes based on the $(n,\ell)$-equivalence relation defined above.
\begin{theorem}
     \label{Equiv_2}
 Let $n, \ell$ be two integers such that $0<\ell<n$ and let $\xi$ be  a primitive element of $\Fq$.  Set  $ d:=\gcd \left(\frac{q-1}{ \gcd(n,q-1)}, \frac{q-1}{ \gcd(n-\ell,q-1)}\right)$  and $ d_i:=\gcd(n-i,q-1)$, for $ \ i\in \{0,\ell\}$. Moreover, let $a_0,a_\ell \in \Fq^*$. 
 \begin{enumerate}
     \item  If $d=1,$ then the class of $\ell$-trinomial codes associated with $x^n-a_{\ell}x^{\ell}-a_0$ is equivalent to the class of $ \ell$-trinomial codes associated with $x^n-\xi^j x^{\ell}- \xi^i,$ for some $i\in \{0,1,\ldots, d_0-1 \}$ and $ j\in \{0,1,\ldots, d_{\ell}-1\}$. 
    \item    If $d\neq 1,$ then the class of $\ell$-trinomial codes associated with $x^n-a_{\ell}x^{\ell}-a_0$ is equivalent to the class of $ \ell$-trinomial codes associated with  $x^n-\xi^{j} x^{\ell}- \xi^{i+hn},$ for some $i\in\{0,1,\ldots, d_0-1\}$,  $j\in \{0,1,\ldots, d_{\ell}-1\}$  and $ h\in \{0,\ldots, d-1\}$.  
 \end{enumerate}
\end{theorem}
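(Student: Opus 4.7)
The plan is to invoke item~(5) of Theorem~\ref{Th.1}: $(a_0, a_\ell) \sim_{(n,\ell)} (b_0, b_\ell)$ exactly when there exists $\alpha \in \Fq^*$ with $(a_0\alpha^n, a_\ell \alpha^{n-\ell}) = (b_0, b_\ell)$. Writing $a_0 = \xi^A$, $a_\ell = \xi^B$ and $\alpha = \xi^t$, part~(2) amounts to finding $t \in \mathbb{Z}$ together with integers $i, j, h$ in the prescribed ranges such that
\begin{align*}
A + tn &\equiv i + hn \pmod{q-1}, \\
B + t(n-\ell) &\equiv j \pmod{q-1}.
\end{align*}
My strategy is to first reduce these congruences modulo $d_0$ and $d_\ell$ to pin down $i$ and $j$, and then apply the Chinese Remainder Theorem to solve for $t$, at the cost of one further discrete parameter $h \in \{0, \ldots, d-1\}$.

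Since $d_0 \mid n$, reducing the first congruence modulo $d_0$ forces $i \equiv A \pmod{d_0}$, so $i$ is the unique representative of $A \bmod d_0$ in $\{0,\ldots,d_0-1\}$; similarly, since $d_\ell \mid n-\ell$, $j$ is forced to be the unique representative of $B \bmod d_\ell$ in $\{0, \ldots, d_\ell - 1\}$. Writing $A - i = d_0 A'$ and $B - j = d_\ell B'$ and dividing the two congruences by $d_0$ and $d_\ell$ respectively, the system becomes
\[
t - h \equiv -A' u \pmod{M}, \qquad t \equiv -B' v \pmod{M'},
\]
where $M := (q-1)/d_0$, $M' := (q-1)/d_\ell$, and $u$, $v$ denote the multiplicative inverses of $n/d_0$ modulo $M$ and of $(n-\ell)/d_\ell$ modulo $M'$ respectively. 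These inverses exist because $\gcd(n/d_0, M) = \gcd((n-\ell)/d_\ell, M') = 1$ by the definitions of $d_0$ and $d_\ell$.

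The CRT then guarantees a common solution $t$ precisely when $h - A'u \equiv -B'v \pmod{\gcd(M, M')}$, and $\gcd(M, M') = d$ by definition. This final congruence admits a unique $h$ in $\{0, \ldots, d-1\}$, namely $h := (A'u - B'v) \bmod d$; for this $h$ a compatible $t$ exists (determined modulo $\lcm(M, M')$), and $\alpha := \xi^t$ is the element required by Theorem~\ref{Th.1}(5). This yields part~(2), and part~(1) is the special case $d = 1$, in which the CRT compatibility is vacuous, $h = 0$ is forced, and the representative collapses to $x^n - \xi^j x^\ell - \xi^i$.

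The main obstacle is essentially bookkeeping: one must keep the moduli $d_0$, $d_\ell$, $M$, $M'$, $d$ consistent and verify that the divisibility reductions are valid at each step. As a sanity check, the construction above produces a unique admissible triple $(i, j, h)$ for each starting pair $(a_0, a_\ell)$, so distinct admissible triples represent distinct $(n,\ell)$-equivalence classes, matching the count $N_{(n,\ell)}$ from Theorem~\ref{Th.1}.
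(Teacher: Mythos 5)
Your proof is correct, and it takes a more explicit route than the paper. The paper argues group-theoretically: by the equivalence of (1) and (6) in Theorem \ref{Th.1}, the classes are the cosets of the cyclic subgroup $H=\langle(\xi^n,\xi^{n-\ell})\rangle$ in $\Fq^*\times\Fq^*$, and the paper simply asserts that the pairs $(\xi^{i+hn},\xi^j)$ (with $h=0$ when $d=1$) form a complete set of coset representatives, the count $d\,d_0\,d_\ell$ matching the index of $H$ computed in Theorem \ref{Th.1}. You instead pass to discrete logarithms and item (5) of Theorem \ref{Th.1}, turning the question into the linear congruences $A+tn\equiv i+hn$ and $B+t(n-\ell)\equiv j \pmod{q-1}$, which you solve by reducing modulo $d_0$ and $d_\ell$ to force $i$ and $j$, and then applying the Chinese Remainder Theorem (with non-coprime moduli $M$ and $M'$, $\gcd(M,M')=d$) to produce $h$ and $t$. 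The two arguments are of course two faces of the same coset decomposition, but yours buys something the paper's does not: it actually \emph{verifies} that the listed pairs are attained (the paper's claimed partition in the $d\neq 1$ case is stated without checking that the $d\,d_0\,d_\ell$ cosets $(\xi^{i+hn},\xi^j)H$ are pairwise distinct), it is constructive (giving an explicit algorithm for the representative triple $(i,j,h)$ and the isomorphism $\varphi_{\xi^t}$), and your closing sanity check — that each representative maps to its own triple, so distinct triples give distinct classes — supplies exactly the distinctness argument the paper omits. The paper's version is shorter and makes the group-theoretic structure more visible. Both are valid; keep yours, but you could add one sentence noting that the forced values of $i$, $j$, $h$ are invariants of the equivalence class (each depends only on $A\bmod d_0$, $B\bmod d_\ell$, and $A'u-B'v\bmod d$, all of which are unchanged under multiplication by elements of $H$), which makes the final counting remark fully airtight.
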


\begin{proof}
\begin{enumerate}
    \item  If $ d = 1 $, then the cyclic group $ H $ generated by $ (\xi^n, \xi^{n-\ell}) $ is isomorphic to the group $ \langle \xi^n \rangle \times \langle \xi^{n-\ell} \rangle $ and has order $ \frac{(q-1)^2}{d_0 d_{\ell}} $. By Theorem \ref{Th.1}, the number of $ (n, \ell) $-equivalence classes is $ d_0 d_{\ell} $. Therefore, we can partition $ \Fq^* \times \Fq^* $ as 
$$
\Fq^* \times \Fq^* =  \bigcup_{i=0}^{d_0-1} \bigcup_{j=0}^{d_{\ell}-1} (\xi^i, \xi^j) H.
$$
Then any pair $(a_0,a_{\ell})$ is $(n,\ell)-$equivalent to one of the pairs   $(\xi^i, \xi^j), $ for  $i=0,1,\ldots, d_0-1$, and  $j=0,1,\ldots, d_{\ell}-1$. 

 \item 
 If $d\neq 1, $ the number of  $(n,\ell)$-equivalence classes is $ d d_0 d_{\ell} ,$ and so we partition $ \Fq^* \times \Fq^* $ as 
$$  \Fq^{*}\times \Fq^{*} = \bigcup_{h=0}^{d-1}  \bigcup_{i=0}^{d_0-1} \bigcup_{j=0}^{d_{\ell}-1} (\xi^{i+hn} ,\xi^{j}) H , $$
which implies the second statement, similarly to the first case.
\end{enumerate}
  \qed
\end{proof}

In the following, we deduce additional properties of the $(n,\ell)$-equivalence between families of $\ell$-trinomial codes, linking this $(n,\ell)$-equivalence to that of constacyclic codes from \cite{Chen2014,Chen2012}.

\begin{corollary}
    \label{Eq_Consta}
Let $ (a_0, a_{\ell}) $ and $ (b_0, b_{\ell}) $ be elements of $ \Fq^{*} \times \Fq^{*} $ such that $ (a_0, a_{\ell}) \sim_{(n, \ell)} (b_0, b_{\ell}) $. Then, for each $ i \in \{0, \ell\} $,
\begin{enumerate}
    \item $ a_i^{-1} b_i \in \langle \xi^{n-i} \rangle $, where $ \xi $ is a primitive element of $ \Fq^* $,
    \item $ (a_i^{-1} b_i)^{d_i} = 1 $, where $ d_i = \frac{q-1}{\gcd(n-i, q-1)} $,
    \item $ a_i \sim_{n-i} b_i $, i.e., the class of $a_i$-constacyclic codes of length $n-i$ is equivalent to the class of $b_i$-constacyclic codes of length $n-i$ over $\Fq.$
\end{enumerate}
\end{corollary}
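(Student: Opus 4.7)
The plan is to extract the explicit scalar $\alpha$ furnished by Theorem \ref{Th.1} and then read off all three conclusions by direct inspection. From the hypothesis $(a_0, a_{\ell}) \sim_{(n,\ell)} (b_0, b_{\ell})$, the equivalence of (1) and (5) in Theorem \ref{Th.1} produces an $\alpha \in \Fq^*$ with $(a_0, a_{\ell}) \star (\alpha^n, \alpha^{n-\ell}) = (b_0, b_{\ell})$. Componentwise this reads $a_i \alpha^{n-i} = b_i$, so that
$$a_i^{-1} b_i = \alpha^{n-i} \quad \text{for each } i \in \{0, \ell\}.$$
Every subsequent assertion will be obtained from this single identity.

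For (1), I would write $\alpha = \xi^k$ for some integer $k$ using cyclicity of $\Fq^*$; then $a_i^{-1} b_i = \xi^{k(n-i)} = (\xi^{n-i})^k \in \langle \xi^{n-i} \rangle$. For (2), it suffices to recall that the order of $\xi^{n-i}$ in $\Fq^*$ is exactly $(q-1)/\gcd(n-i,q-1) = d_i$, so every element of $\langle \xi^{n-i} \rangle$ is annihilated by the $d_i$-th power map; applying this to $a_i^{-1} b_i$ yields the identity $(a_i^{-1} b_i)^{d_i} = 1$.

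For (3), I would invoke Chen's characterization of $n$-equivalence recalled in the introduction: for constants $\lambda, \mu \in \Fq^*$, the relation $\lambda \sim_m \mu$ holds if and only if $\lambda x^m - \mu$ has a root in $\Fq^*$. Applying this with $m = n-i$, $\lambda = a_i$, $\mu = b_i$, the equality $a_i \alpha^{n-i} = b_i$ exhibits $\alpha$ as the required root, so $a_i \sim_{n-i} b_i$.

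I do not anticipate any genuine obstacle here: all the hard work has already been performed inside Theorem \ref{Th.1}, and the corollary is essentially the statement that the single scalar $\alpha$ controlling the $(n,\ell)$-equivalence simultaneously controls the constacyclic $n$-equivalence on each coordinate. The only point requiring mild care is the bookkeeping in (3), where one must apply Chen's relation at the correct length $n-i$ (rather than $n$), but this is immediate from the indexing $a_i \alpha^{n-i} = b_i$.
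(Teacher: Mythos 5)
Your proof is correct and follows essentially the same route the paper intends (the corollary is left without a written proof in the compiled text, but the authors' draft argument is exactly this): extract $\alpha$ with $a_i\alpha^{n-i}=b_i$ from the equivalence of (1) and (5) in Theorem \ref{Th.1}, note $\alpha^{n-i}\in\langle\xi^{n-i}\rangle$ for (1), use that the order of $\xi^{n-i}$ is $d_i$ for (2), and invoke Chen's root characterization of $\sim_{n-i}$ for (3). No gaps.
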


This last result implies that we can use known results about the equivalence of families of constacyclic codes for the equivalence of trinomial codes.

\section{Equivalence of $p^{\ell}$-trinomial Codes of Length $n=p^{\ell+r}$.}\label{sec:eq_trinomial_special}

In this section we study $p^{\ell}$-trinomial  codes of length $n = p^{\ell+r}$, where $p$ is the characteristic of the underlying field $\Fq$ and $r$ an integer. First, we recall the following lemma, which combines Artin-Schreier's theorem \cite[Theorem 12.2.1]{Roman1995} and \cite[Corollary 3.79]{Lidl1987}.

\begin{lemma}\label{LLL.6}
 Let $a \in \mathbb{F}_q$ and let $p$ be the characteristic of $\mathbb{F}_q$. Then the trinomial $x^p - x - a$ is irreducible in $\mathbb{F}_q[x]$ if and only if $\Tr_{\mathbb{F}_q / \mathbb{F}_p }(a) \neq 0$; otherwise it splits (into linear factors) in $\mathbb{F}_q$. If it splits, then the roots are of the form $\beta+i$ for $i=0,1,\dots,p-1$ and any root $\beta$.

\end{lemma}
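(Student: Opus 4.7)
My approach follows classical Artin--Schreier theory and splits into three stages: first uncovering the root structure, then dichotomizing irreducibility versus complete splitting via the Frobenius action, and finally identifying the image of $x\mapsto x^p-x$ with the kernel of the absolute trace. The root-structure claim (``roots of the form $\beta+i$'') will fall out of stage one, while the trace criterion is the conclusion of stage three.

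The first stage is a one-line calculation. For any root $\beta$ of $x^p-x-a$ in an algebraic closure of $\Fq$ and any $i\in\mathbb{F}_p$, Fermat's little theorem gives $(\beta+i)^p-(\beta+i)-a=(\beta^p-\beta-a)+(i^p-i)=0$, so $\beta,\beta+1,\ldots,\beta+(p-1)$ are $p$ distinct roots and therefore all of them. This immediately proves the last sentence of the lemma and shows that $\Fq(\beta)$ is already the splitting field.

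Next, I would exploit the Frobenius $\sigma:x\mapsto x^q$ of $\Fq$. Since $\sigma(\beta)$ is again a root, stage one forces $\sigma(\beta)=\beta+c$ for some $c\in\mathbb{F}_p$, and induction yields $\sigma^k(\beta)=\beta+kc$. If $c=0$ then $\beta\in\Fq$ and the polynomial splits over $\Fq$; if $c\neq 0$, the smallest $k\geq 1$ with $\sigma^k(\beta)=\beta$ is $k=p$, so $[\Fq(\beta):\Fq]=p$, the minimal polynomial of $\beta$ has degree $p$, and the trinomial is irreducible. This gives the clean dichotomy ``irreducible or splits completely''.

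The last stage is the trace criterion. I would look at the $\mathbb{F}_p$-linear map $T:\Fq\to\Fq$, $T(x)=x^p-x$, whose kernel is $\{x\in\Fq:x^p=x\}=\mathbb{F}_p$, so $\dim_{\mathbb{F}_p}\mathrm{Im}(T)=s-1$ for $q=p^s$. The identity $\Tr(x^p)=\Tr(x)$ (Frobenius invariance of $\Tr$) gives $\mathrm{Im}(T)\subseteq\ker\Tr_{\Fq/\mathbb{F}_p}$, and the latter also has $\mathbb{F}_p$-dimension $s-1$ since $\Tr_{\Fq/\mathbb{F}_p}$ is surjective onto $\mathbb{F}_p$. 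A dimension count then forces equality, so $x^p-x=a$ admits a root in $\Fq$ if and only if $\Tr_{\Fq/\mathbb{F}_p}(a)=0$; combining with stage two concludes the proof. No step is a serious obstacle: the only ingredient that is not immediate from the setup is the standard surjectivity of the absolute trace, which I would simply cite.
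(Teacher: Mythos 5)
Your proof is correct in all three stages: the translation of roots by $\mathbb{F}_p$, the Frobenius-orbit dichotomy between irreducibility and complete splitting, and the dimension count identifying the image of $x\mapsto x^p-x$ with the kernel of $\Tr_{\Fq/\mathbb{F}_p}$. The paper itself offers no proof of this lemma -- it is stated as a recalled fact combining Artin--Schreier's theorem and a corollary from Lidl--Niederreiter -- so there is nothing internal to compare against; your argument is the standard one underlying those cited sources and is a valid self-contained substitute for the citation. The only ingredient you leave uncited, the surjectivity of the absolute trace, is indeed standard and acceptable to invoke.
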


From the above lemma we easily deduce the following proposition.
\begin{proposition}\label{PP.3}
Let $\Fq$ be a finite field with $q=p^s$ elements. Then,
\begin{enumerate}
    \item The trinomial $x^p-x-1$ is irreducible over $\Fq$ if and only if $ \gcd(s,p)=1;$ otherwise, it splits in $\mathbb{F}_q$. 
    \item For any $b\in \Fq^*,$  the trinomial $b^px^p-bx-1$ is irreducible over $\Fq$ if and only if $ \gcd(s,p)=1;$ otherwise, it splits in $\mathbb{F}_q$.
    \item In particular, if $p=2, $  the polynomial $x^2-x-1$ is irreducible over $\mathbb{F}_{2^{2k+1} }$  for any  positive integer $k;$ otherwise, it splits in $\mathbb{F}_{2^{2k} }.$
\end{enumerate}
    
\end{proposition}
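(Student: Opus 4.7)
The plan is to apply Lemma \ref{LLL.6} directly, after reducing each of the three trinomials in the proposition to the standard Artin--Schreier form $x^p - x - a$ and computing the relevant trace.

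For part (1), the polynomial $x^p - x - 1$ is already in Artin--Schreier form with $a = 1$. By Lemma \ref{LLL.6} it is irreducible over $\mathbb{F}_q$ iff $\Tr_{\mathbb{F}_q/\mathbb{F}_p}(1) \neq 0$; otherwise it splits into linear factors in $\mathbb{F}_q$. Since
$$
\Tr_{\mathbb{F}_q/\mathbb{F}_p}(1) \;=\; \sum_{i=0}^{s-1} 1^{p^i} \;=\; s \pmod p,
$$
we have $\Tr_{\mathbb{F}_q/\mathbb{F}_p}(1) \neq 0$ iff $p \nmid s$, i.e.\ iff $\gcd(s,p) = 1$. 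This gives (1).

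For part (2), the key step is to reduce $b^p x^p - bx - 1$ to the form of part (1) via the $\mathbb{F}_q$-linear substitution $y = bx$. Since $b \in \mathbb{F}_q^*$, the map $x \mapsto bx$ is a bijection of $\mathbb{F}_q$ (and of its algebraic closure). Substituting $x = y/b$ yields
$$
b^p x^p - bx - 1 \;=\; y^p - y - 1,
$$
so $\alpha \in \overline{\mathbb{F}_q}$ is a root of $b^p x^p - bx - 1$ iff $b\alpha$ is a root of $y^p - y - 1$. Hence the roots of $b^p x^p - bx - 1$ lie in $\mathbb{F}_q$ iff those of $y^p - y - 1$ do, and each root of $b^p x^p - bx - 1$ generates the same extension of $\mathbb{F}_q$ as the corresponding root of $y^p - y - 1$ (they differ by the $\mathbb{F}_q$-scalar $b$). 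Therefore $b^p x^p - bx - 1$ is irreducible (respectively splits) over $\mathbb{F}_q$ exactly when $y^p - y - 1$ is irreducible (respectively splits) over $\mathbb{F}_q$, and (2) follows from (1).

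For part (3), we simply specialize (1) to $p = 2$: the polynomial $x^2 - x - 1$ is irreducible over $\mathbb{F}_{2^s}$ iff $\gcd(s,2) = 1$, i.e.\ iff $s$ is odd, which is the case $s = 2k+1$; for $s = 2k$ the polynomial splits. No step is expected to present a genuine obstacle, since the main content is the Artin--Schreier dichotomy already recorded in Lemma \ref{LLL.6}; the only point requiring a small amount of care is verifying that the substitution $y = bx$ in (2) preserves both splitting behaviour and irreducibility over $\mathbb{F}_q$, which follows because $b$ is a nonzero element of the base field.
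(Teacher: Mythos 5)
Your proposal is correct and follows essentially the same route as the paper: part (1) via the computation $\Tr_{\mathbb{F}_q/\mathbb{F}_p}(1)=s \bmod p$ combined with Lemma \ref{LLL.6}, part (2) by recognizing $b^px^p-bx-1$ as $f(bx)$ for $f(y)=y^p-y-1$ (the paper simply cites Lidl--Niederreiter for the invariance of irreducibility under $x\mapsto bx$, whereas you verify it directly), and part (3) by specializing to $p=2$.
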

\begin{proof}
\begin{enumerate}
    \item  As  $\Tr_{\Fq/\mathbb{F}_p}(1)= 1+1+\ldots + 1= s$ we have that  $\Tr_{\Fq/\mathbb{F}_p}(1)=0$ in $\mathbb{F}_p$ if and only if $ s$  is a multiple of $p$. As $p$ is a prime number, we get that $\Tr_{\Fq/\mathbb{F}_p}(1)= 0$ is equivalent to $ \gcd(s,p)\neq 1$. The statement now follows from Lemma \ref{LLL.6}.
    \item Follows from the fact that if $f(x)$ is irreducible, then  $ f(bx)$ is also irreducible, for each $b\in \Fq^*$, see \cite[p. 121]{Lidl1987}.
    \item[3.] is direct applications of  1. 
    \qed
    \end{enumerate}
\end{proof}

\begin{corollary}\label{Cor.3}
For each integer $\ell,$ the polynomial $ x^{p^{\ell+1}} - x^{p^{\ell}} - 1 $ can be factorized into irreducible factors over $\mathbb{F}_q$ (where $ q=p^s$) as follows:
  $$
  x^{p^{\ell+1}} - x^{p^{\ell}} - 1 =
  \begin{cases}
  \left( x^p - x - 1 \right)^{p^{\ell}}, & \text{if $\gcd(s,p)=1$ }\\
  \dprod{i=0}{p-1} (x-(\beta+i))^{p^{\ell}}, & \text{else }\\
  \end{cases}
  $$
where $ \beta\in \Fq$ is a root of $ x^p - x - 1.$
\end{corollary}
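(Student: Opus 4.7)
The approach is to reduce the whole factorization to the base case $\ell=0$ using the Frobenius endomorphism, and then to invoke Proposition \ref{PP.3} and Lemma \ref{LLL.6} to handle that base case. There is essentially one non-trivial observation and two citations.

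First, I would exploit the characteristic-$p$ identity (the ``Freshman's dream''): in any commutative ring of characteristic $p$, raising to the $p^{\ell}$-th power is an additive endomorphism. Applying this in $\Fq[x]$ with summands $x^p$ and $-x-1$, we get
\[
(x^p - x - 1)^{p^{\ell}} \;=\; (x^p)^{p^{\ell}} - x^{p^{\ell}} - 1^{p^{\ell}} \;=\; x^{p^{\ell+1}} - x^{p^{\ell}} - 1.
\]
This reduces the problem to factoring $x^p - x - 1$ over $\Fq$ and then raising that factorization to the $p^{\ell}$-th power.

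Second, the case split is read off directly from Proposition \ref{PP.3}(1): if $\gcd(s,p)=1$, then $x^p-x-1$ is irreducible over $\Fq$, hence $(x^p-x-1)^{p^{\ell}}$ is already the irreducible factorization of $x^{p^{\ell+1}}-x^{p^{\ell}}-1$. Otherwise, Lemma \ref{LLL.6} guarantees that $x^p-x-1$ splits into $\prod_{i=0}^{p-1}(x-(\beta+i))$ for any root $\beta \in \Fq$, so raising to the $p^{\ell}$-th power yields
\[
x^{p^{\ell+1}} - x^{p^{\ell}} - 1 \;=\; \prod_{i=0}^{p-1}\bigl(x-(\beta+i)\bigr)^{p^{\ell}}.
\]

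There is no real obstacle: the only small point worth checking is that the linear factors $x-(\beta+i)$ for $i=0,\ldots,p-1$ are pairwise distinct, which follows immediately because $\Fq$ contains $\mathbb{F}_p$ and its characteristic is $p$, so the $p$ shifts $\beta+i$ are distinct elements of $\Fq$. Therefore the displayed product really is the decomposition into irreducible (linear) factors over $\Fq$, and the corollary follows.
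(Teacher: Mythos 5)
Your proof is correct and follows essentially the same route as the paper: both use the characteristic-$p$ identity $(x^p-x-1)^{p^{\ell}}=x^{p^{\ell+1}}-x^{p^{\ell}}-1$ and then apply Proposition \ref{PP.3} (equivalently Lemma \ref{LLL.6}) to split into the irreducible and the totally split case. Your added remark that the shifts $\beta+i$ are pairwise distinct is a small but welcome completeness check that the paper leaves implicit.
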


\begin{proof}
As the characteristic of $\Fq$ is $p,$ we have by Proposition \ref{PP.3} that
$$ x^{p^{\ell+1}}-x^{p^{\ell}}-1= \left(x^{p}-x-1 \right)^{p^{\ell}}=
\begin{cases}
 \left(x^{p}-x-1 \right)^{p^{\ell}} , & \text{if $\gcd(s,p)=1$ }\\
\dprod{i=0}{p-1} (x-(\beta+i))^{p^{\ell}}, & \text{else }\\
  \end{cases}
$$
 where $\beta $ is a root of $ x^p - x - 1.$
\qed 
\end{proof}

We can now derive results about the possible generator polynomials of codes associated to some specific trinomials. We start with codes of length $n=p^{\ell+1}$:

\begin{theorem}\label{Th.4}
Let $\ell$ be an integer and  $a, b$ be two elements of $\mathbb{F}_q^*$ such that $a^{p} = b^{p-1}$. Then each  $p^{\ell}$-trinomial code  associated with the polynomial $x^{p^{\ell+1}} - a x^{p^{\ell}} - b$ has a polynomial generator of the form 
$$
g(x) =   \begin{cases}
 \left((\alpha x)^{p}- \alpha x-1 \right)^{j} , & \text{if $\gcd(s,p)=1$ }\\
\dprod{i=0}{p-1} ( \alpha x-(\beta+i))^{j}, & \text{else }\\
  \end{cases} 
  \quad \text{with } 1 \leq j \leq p^{\ell},
$$
where 
$\alpha = (b a^{-1})^{p^{-\ell}}$, and $\beta$ a root of $ x^p-x-1.$

\end{theorem}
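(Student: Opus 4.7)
The strategy is to push the explicit factorization of Corollary~\ref{Cor.3} through the $(n,\ell)$-equivalence developed in Section~3, thereby reducing everything to the canonical trinomial $x^{p^{\ell+1}} - x^{p^\ell} - 1$. First, I would apply Corollary~\ref{Cor.ver2} with $n = p^{\ell+1}$, position index $p^\ell$, and $(a_0, a_{p^\ell}) = (b, a)$. Since $\gcd(n, n - p^\ell) = \gcd(p^{\ell+1}, p^\ell(p-1)) = p^\ell$, the exponents in condition~(4) of that corollary are $n/p^\ell = p$ and $(n - p^\ell)/p^\ell = p-1$, so the condition collapses to $a^p = b^{p-1}$, which is precisely the hypothesis. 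Hence the family associated with $x^n - a x^{p^\ell} - b$ is $(n, p^\ell)$-equivalent to the family associated with $x^n - x^{p^\ell} - 1$ via an $\Fq$-algebra isomorphism $\varphi_\alpha : f(x) \mapsto f(\alpha x)$, for some $\alpha \in \Fq^*$ satisfying $\alpha^n = b^{-1}$ and $\alpha^{n-p^\ell} = a^{-1}$. A short manipulation combining these two equations with $a^p = b^{p-1}$ pins down $\alpha^{p^\ell} = b a^{-1}$, i.e.\ $\alpha = (b a^{-1})^{p^{-\ell}}$ (the inverse Frobenius is well defined on $\Fq$).

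Next, I would read off the generators in the canonical ring. By Corollary~\ref{Cor.3}, the polynomial $x^{p^{\ell+1}} - x^{p^\ell} - 1$ equals $(x^p - x - 1)^{p^\ell}$ when $\gcd(s,p)=1$ and $\prod_{i=0}^{p-1}(x - (\beta+i))^{p^\ell}$ otherwise, with $\beta$ a root of $x^p - x - 1$. In the irreducible case the quotient ring $\Fq[x]/\langle x^n - x^{p^\ell} - 1\rangle$ is local with unique maximal ideal $\langle x^p - x - 1\rangle$, so every nonzero ideal has the form $\langle (x^p - x - 1)^j\rangle$ with $1 \leq j \leq p^\ell$. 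In the split case, the ideals $\langle \prod_{i=0}^{p-1}(x - (\beta+i))^j\rangle$ with $1 \leq j \leq p^\ell$ form the one-parameter sub-family singled out in the statement.

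Finally, I would transport these generators back through $\varphi_\alpha$. Because $\varphi_\alpha$ is a ring isomorphism, $\langle g(x)\rangle$ maps to $\langle g(\alpha x)\rangle$, and substituting $x \mapsto \alpha x$ in the canonical generators immediately produces $((\alpha x)^p - \alpha x - 1)^j$ and $\prod_{i=0}^{p-1}(\alpha x - (\beta + i))^j$, respectively; after rescaling by the appropriate power of $\alpha^{-1}$ to obtain the monic standard generator, we recover the forms claimed. The one delicate step is the joint existence and explicit form of $\alpha$: the two binomial equations for $\alpha$ are simultaneously solvable in $\Fq^*$ exactly because Lemma~\ref{Cor_Binomial} packages the compatibility condition as $a^p = b^{p-1}$ — the hypothesis — and once $\alpha$ is fixed, Theorem~\ref{Th.1} furnishes the ring isomorphism and everything else is routine transport along it.
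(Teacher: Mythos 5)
Your proposal is correct in substance and is, at bottom, the same computation as the paper's, just organized more modularly: the paper finds the common root $\alpha$ of $x^{p^{\ell+1}}-b$ and $x^{p^{\ell+1}-p^{\ell}}-a$ directly via Lemma~\ref{L_Binomial} and then factors $x^{p^{\ell+1}}-ax^{p^{\ell}}-b = b\bigl((\alpha^{-1}x)^p-\alpha^{-1}x-1\bigr)^{p^{\ell}}$ by hand, whereas you route the same substitution $x\mapsto \alpha x$ through Corollary~\ref{Cor.ver2} and transport the factorization of the canonical trinomial from Corollary~\ref{Cor.3}. Your version buys a cleaner separation of concerns (equivalence first, factorization second) at no real cost, and your observation that the second clause of condition (4) is automatic because $x^{p^{\ell}}-c$ always has a root in characteristic $p$ is correct and worth stating explicitly. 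Two small points. First, an inversion: from $\alpha^{n}=b^{-1}$ and $\alpha^{n-p^{\ell}}=a^{-1}$ you get $\alpha^{p^{\ell}}=a b^{-1}$, not $b a^{-1}$; the paper's own proof produces a common root $\delta$ with $\delta^{p^{\ell}}=ba^{-1}$ but then writes the generator in terms of $\delta^{-1}x$, so the theorem's displayed pairing of $\alpha=(ba^{-1})^{p^{-\ell}}$ with $(\alpha x)^p-\alpha x-1$ is itself internally inconsistent --- just make sure your $\alpha$ and the variable appearing in the generator are matched consistently. Second, in the split case the quotient ring is a product of $p$ local rings, so the ideals are generated by $\prod_{i}(\alpha x-(\beta+i))^{j_i}$ with \emph{independent} exponents $j_i$; the common-exponent family in the statement is only a sub-family, a gloss present in the paper's proof as well, and your phrasing (``the one-parameter sub-family singled out in the statement'') is in fact the more careful reading.
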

\begin{proof}
 As $a^{p} = b^{p-1}$, we have from Lemma \ref{L_Binomial} that
 $$\gcd(x^{p^{\ell+1} } - b, x^{p^{\ell+1} - p^{\ell}} - a) = x^{\gcd(p^{\ell+1} - p^{\ell}, p^{\ell})} - c = x^{p^{\ell}} - c = (x - \alpha)^{p^{\ell}},$$ for some $c$ and $\alpha$ in $\mathbb{F}_q^*$ such that $c = \alpha^{p^{\ell}}$. This polynomial has a root $\alpha$ with multiplicity $p^{\ell}$. Since $\alpha$ is a common root of $x^{p^{\ell+1}} - b$ and $x^{p^{\ell+1} - p^{\ell}} - a$, it follows that $\alpha^{p^{\ell+1}} = b$ and $\alpha^{p^{\ell+1} - p^{\ell}} = a$. Thus,
$$
b a^{-1} = \frac{\alpha^{p^{\ell+1}}}{\alpha^{p^{\ell+1} - p^{\ell}}} = \alpha^{p^{\ell}}.
$$
Now we can factor $x^{p^{\ell+1}} - a x^{p^{\ell}} - b$ as follows:
$$
\begin{array}{rl}
x^{p^{\ell+1}} - a x^{p^{\ell}} - b &= b \left( b^{-1} x^{p^{\ell+1}} - ab^{-1} x^{p^{\ell}} - 1 \right) \\
&= b \left( \alpha^{-p^{\ell+1}} x^{p^{\ell+1}} - \alpha^{-p^{\ell}} x^{p^{\ell}} - 1 \right) \\
&= b \left( (\alpha^{-1} x)^{p^{\ell+1}} - (\alpha^{-1} x)^{p^{\ell}} - 1 \right) \\
&= b \left( (\alpha^{-1} x)^p - \alpha^{-1} x - 1 \right)^{p^{\ell}}.
\end{array}
$$
According to Lemma \ref{LLL.6}, $x^p - x - 1$ is irreducible because $\gcd(s, p) = 1; $ otherwise, it splits in $\mathbb{F}_q$. Therefore, $x^{p^{\ell+1}} - a x^{p^{\ell}} - b$ can be factored over $\mathbb{F}_q$ as follows:
$$
x^{p^{\ell+1}} - a x^{p^{\ell}} - b = 
\begin{cases}
b \left( (\alpha^{-1} x)^p - (\alpha^{-1} x) - 1 \right)^{p^{\ell}},  & \text{if $\gcd(s,p)=1$ }\\
\dprod{i=0}{p-1} ( \alpha x-(\beta+i))^{p^{\ell}}, & \text{else }
\end{cases}.
$$
Now any $p^{\ell}$-trinomial code associated with the polynomial $x^{p^{\ell+1}} - a x^{p^{\ell}} - b$ has a generator polynomial of the  desired form.
\qed 
\end{proof}

We then turn to codes of length $n=p^{\ell+s}$, where $s$ is the extension degree of $\Fq$. For this we first need the following lemma.

\begin{lemma}\cite[Theorem. 3.80.]{Lidl1987}\label{L.5}
 For $x^q-x-a$ with $a$  an element of the subfield $\mathbb{F}_{p^r}$ of $\mathbb{F}_q, \ q=p^s$, we have the decomposition
$$
x^q-x-a= \prod_{j=1}^{q/p^r}\left(x^{p^r}-x-\beta_j\right)=\prod_{j=1}^{p^{s-r}}\left(x^{p^r}-x-\beta_j\right)
$$
in $\mathbb{F}_q[x]$, where the $\beta_j$ are the distinct elements of $\mathbb{F}_q$ with $\Tr_{\Fq / \mathbb{F}_{p^r}}\left(\beta_j\right)=a$.
\end{lemma}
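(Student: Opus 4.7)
The plan is to establish the factorization
$$
x^q - x - a \;=\; \prod_{j=1}^{p^{s-r}} \bigl( x^{p^r} - x - \beta_j \bigr)
$$
by comparing the two polynomials root-by-root in $\overline{\mathbb{F}_q}$. Both sides are monic of degree $q$, so it suffices to show that the right-hand product has exactly $q$ pairwise distinct roots and that every one of them is also a root of $x^q - x - a$.

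First I would prove the key telescoping identity: for any $\gamma \in \overline{\mathbb{F}_q}$ with $\gamma^{p^r} - \gamma = \beta_j$, a short induction on $k$, using that raising to the $p^r$-th power is additive in characteristic $p$, yields
$$
\gamma^{p^{kr}} - \gamma \;=\; \beta_j + \beta_j^{p^r} + \cdots + \beta_j^{p^{(k-1)r}}.
$$
Setting $k = s/r$, which is a positive integer because $\mathbb{F}_{p^r}$ is a subfield of $\mathbb{F}_q$, the right-hand side is precisely $\Tr_{\mathbb{F}_q/\mathbb{F}_{p^r}}(\beta_j) = a$. Hence $\gamma^q - \gamma = a$, so every root of a factor $x^{p^r}-x-\beta_j$ on the right is a root of $x^q - x - a$ on the left.

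Next I would take care of distinctness and counting. Each factor $x^{p^r} - x - \beta_j$ has derivative $-1$, so it is separable and contributes $p^r$ distinct roots in $\overline{\mathbb{F}_q}$; moreover, two factors with $\beta_j \neq \beta_k$ cannot share a root $\gamma$, since that would force $\beta_j = \gamma^{p^r} - \gamma = \beta_k$. The number of admissible $\beta_j$ is obtained from the fact that $\Tr_{\mathbb{F}_q/\mathbb{F}_{p^r}}$ is a surjective $\mathbb{F}_{p^r}$-linear map from $\mathbb{F}_q$ (of $\mathbb{F}_{p^r}$-dimension $s/r$) to $\mathbb{F}_{p^r}$, whose fibers therefore have cardinality $p^{s-r}$. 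The right-hand product has accordingly $p^{s-r}\cdot p^r = q$ distinct roots, all of which are roots of the monic degree-$q$ polynomial $x^q - x - a$, so the two polynomials must coincide.

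The step I expect to be the main obstacle is the clean establishment of the telescoping identity for $\gamma^{p^{kr}} - \gamma$; everything afterwards is bookkeeping. Conceptually, this identity reflects the operator factorization $\Frob_p^{\,s} - 1 = (\Frob_p^{\,r} - 1)\bigl(1 + \Frob_p^{\,r} + \cdots + \Frob_p^{\,r(s/r-1)}\bigr)$ on $\mathbb{F}_q$, whose second factor restricts to $\Tr_{\mathbb{F}_q/\mathbb{F}_{p^r}}$; keeping this viewpoint in mind, the induction becomes natural and the count is forced.
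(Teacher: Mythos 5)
Your proof is correct. Note that the paper does not actually prove this statement: it is quoted verbatim from Lidl--Niederreiter \cite[Theorem 3.80]{Lidl1987}, so there is no internal argument to compare against. Your self-contained argument is the standard one and all steps check out: the telescoping identity $\gamma^{p^{kr}}-\gamma=\beta_j+\beta_j^{p^r}+\cdots+\beta_j^{p^{(k-1)r}}$ follows by the induction you describe, specializing $k=s/r$ turns its right-hand side into $\Tr_{\Fq/\mathbb{F}_{p^r}}(\beta_j)=a$, the derivative computation gives separability of each factor, disjointness of the root sets for distinct $\beta_j$ is immediate, and the fiber count $p^{s-r}$ for the surjective $\mathbb{F}_{p^r}$-linear trace map yields exactly $q$ distinct common roots, forcing the two monic degree-$q$ polynomials to coincide. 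The only point worth making explicit in a written version is the surjectivity of the relative trace (e.g.\ because $\Tr_{\Fq/\mathbb{F}_{p^r}}(x)$ is a nonzero polynomial of degree $p^{s-r}<q$ and hence cannot vanish on all of $\Fq$), which you currently assert without justification; it is a standard fact, so this is a presentational remark rather than a gap.
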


\begin{theorem}\label{CCor.4}
Let $\mathbb{F}_q$ be a finite field with $q = p^s$ elements, and  $\ell$  a positive integer. Then the polynomial $x^{p^{\ell+s}} - x^{p^{\ell}} - 1$ has the following irreducible decomposition:
$$
x^{p^{\ell+s}} - x^{p^{\ell}} - 1 = \prod_{j=1}^{p^{s-1}} \left( x^p - x - \beta_j \right)^{p^{\ell}},
$$
where $\operatorname{Tr}_{\mathbb{F}_q / \mathbb{F}_p}(\beta_j) = 1$.
\end{theorem}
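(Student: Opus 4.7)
The plan is to reduce the statement to Lemma \ref{L.5} by first extracting the $p^{\ell}$-th power structure using Frobenius, and then specializing Lemma \ref{L.5} with $r=1$ and $a=1$.

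First I would observe that since $\Fq$ has characteristic $p$, the Frobenius map is additive and multiplicative, so
\begin{equation*}
\left(x^{p^{s}} - x - 1\right)^{p^{\ell}} = x^{p^{s+\ell}} - x^{p^{\ell}} - 1.
\end{equation*}
This reduces the problem to factoring $x^{p^{s}} - x - 1 = x^{q} - x - 1$ into irreducibles in $\Fq[x]$.

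Next I would apply Lemma \ref{L.5} with $r = 1$ (so $\mathbb{F}_{p^{r}} = \mathbb{F}_{p}$) and $a = 1 \in \mathbb{F}_{p}$. The lemma gives
\begin{equation*}
x^{q} - x - 1 = \prod_{j=1}^{p^{s-1}} \left(x^{p} - x - \beta_{j}\right),
\end{equation*}
where the $\beta_{j}$ range over the elements of $\Fq$ with $\Tr_{\Fq/\mathbb{F}_{p}}(\beta_{j}) = 1$. Note that each factor $x^{p}-x-\beta_{j}$ is irreducible over $\Fq$ by Lemma \ref{LLL.6}, since $\Tr_{\Fq/\mathbb{F}_{p}}(\beta_{j}) = 1 \ne 0$. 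Raising this identity to the $p^{\ell}$-th power and using the first step yields
\begin{equation*}
x^{p^{\ell+s}} - x^{p^{\ell}} - 1 = \prod_{j=1}^{p^{s-1}} \left(x^{p} - x - \beta_{j}\right)^{p^{\ell}},
\end{equation*}
which is the claimed factorization. The irreducibility of each base factor, combined with distinctness of the $\beta_{j}$, ensures this is the irreducible decomposition.

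I don't anticipate real obstacles here: the proof is essentially a direct concatenation of the Frobenius identity with the Artin--Schreier-type decomposition of Lemma \ref{L.5}. The only point that needs a brief justification is that each $x^{p}-x-\beta_{j}$ is indeed irreducible (not merely a factor appearing in the product), which follows immediately from $\Tr_{\Fq/\mathbb{F}_{p}}(\beta_{j}) = 1 \ne 0$ and Lemma \ref{LLL.6}.
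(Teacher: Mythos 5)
Your proposal is correct and follows essentially the same route as the paper: both apply the Frobenius identity to reduce to $x^{q}-x-1$, invoke Lemma \ref{L.5} with $r=1$ and $a=1$, and use Lemma \ref{LLL.6} to confirm irreducibility of each factor $x^{p}-x-\beta_{j}$ from $\Tr_{\Fq/\mathbb{F}_{p}}(\beta_{j})=1\neq 0$. No gaps.
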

\begin{proof}
Since the characteristic of $\mathbb{F}_q$ is $p$, we have
$$
x^{p^{\ell+s}} - x^{p^{\ell}} - 1 = \left( x^{p^{s}} - x - 1 \right)^{p^{\ell}} = \left( x^q - x - 1 \right)^{p^{\ell}}.
$$
As $1 \in \mathbb{F}_p$, by Lemma \ref{L.5} we obtain
$$
x^{p^{\ell+s}} - x^{p^{\ell}} - 1 = \prod_{j=1}^{p^{s-1}} \left( x^p - x - \beta_j \right)^{p^{\ell}},
$$
where $\operatorname{Tr}_{\mathbb{F}_q / \mathbb{F}_p}(\beta_j) = 1$. According to Lemma \ref{LLL.6}, since $\operatorname{Tr}_{\mathbb{F}_q / \mathbb{F}_p}(\beta_j) = 1 \neq 0$, the polynomial $x^p - x - \beta_j$ is irreducible over $\mathbb{F}_q$ for each $j = 1, \ldots, p^{s-1}$.
\end{proof}

\begin{corollary}\label{Th.5}
Let $\ell$ be an integer and  $a, b$ be two elements of $\mathbb{F}_q^*$, where $\ q=p^s$. If $(a,b) \sim_{(p^{\ell+s},p^{\ell})}(1,1)$
then    $ a=1$ and   each  $p^{\ell}$-trinomial  code $C$  associated with the polynomial $x^{p^{\ell+s}}- x^{p^{\ell}}-b,$ with $b\in \Fq^{*},$ has a polynomial generator of the form 
$$ g(x)=  \prod_{j=1}^{p^{s-1}} \left( (\alpha^{-1} x)^{p}- \alpha^{-1} x-\beta_j \right)^{i}, \ \text{with} \ 0 \leq i\leq p^{\ell},  $$
for some $\beta_j\in \Fq^*$ with $\Tr_{\Fq/ \mathbb{F}_p}(\beta_j)= 1$ and  $\alpha\in \Fq^{*} $  such that   $ \alpha= b^{-p^{-\ell}}  .$
\footnote{Such an $\alpha$ always exists since the equation $ x^{p^{\ell}}=b^{-1}$ has a unique solution in $\Fq$. 
}
\end{corollary}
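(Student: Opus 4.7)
The plan is to extract a specific $\alpha \in \Fq^{*}$ from the equivalence hypothesis via Theorem~\ref{Th.1}, use it to translate the polynomial $x^{p^{\ell+s}} - x^{p^{\ell}} - b$ into a scalar multiple of $y^{p^{\ell+s}} - y^{p^{\ell}} - 1$ so that Theorem~\ref{CCor.4} applies, and finally read off all possible generators via Proposition~\ref{P2}.

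First, I would apply the equivalence $(1)\Leftrightarrow(5)$ of Theorem~\ref{Th.1} to $(a,b) \sim_{(p^{\ell+s},p^{\ell})} (1,1)$, producing $\alpha \in \Fq^{*}$ with $b\,\alpha^{p^{\ell+s}} = 1$ and $a\,\alpha^{p^{\ell+s}-p^{\ell}} = 1$ (adopting the convention of Theorem~\ref{Th.4} under which $a$ is the coefficient of $x^{p^{\ell}}$ and $b$ is the constant term). Since $\alpha \in \Fq = \mathbb{F}_{p^{s}}$, the identity $\alpha^{p^{s}} = \alpha$ collapses the exponents to $\alpha^{p^{\ell+s}} = \alpha^{p^{\ell}}$ and $\alpha^{p^{\ell+s}-p^{\ell}} = 1$. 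The second relation forces $a=1$; the first becomes $\alpha^{p^{\ell}} = b^{-1}$, whose unique solution in $\Fq$ is $\alpha = b^{-p^{-\ell}}$ by the bijectivity of Frobenius. This simultaneously proves $a=1$ and justifies the footnote.

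Next, I would factor $f(x) := x^{p^{\ell+s}} - x^{p^{\ell}} - b$ through the substitution $y = \alpha^{-1}x$. Using $(\alpha^{-1})^{p^{\ell}} = b$ together with the invariance of $\alpha$ under $\mathrm{Frob}^{s}$, a direct expansion gives
\begin{equation*}
(\alpha^{-1}x)^{p^{\ell+s}} - (\alpha^{-1}x)^{p^{\ell}} - 1 \;=\; b\,x^{p^{\ell+s}} - b\,x^{p^{\ell}} - 1,
\end{equation*}
exhibiting $f(x)$, up to a unit in $\Fq^{*}$, as $y^{p^{\ell+s}} - y^{p^{\ell}} - 1$ evaluated at $y = \alpha^{-1}x$. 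Theorem~\ref{CCor.4} then supplies the irreducible factorization
\begin{equation*}
y^{p^{\ell+s}} - y^{p^{\ell}} - 1 \;=\; \prod_{j=1}^{p^{s-1}}\bigl(y^{p} - y - \beta_{j}\bigr)^{p^{\ell}}, \qquad \Tr_{\Fq/\mathbb{F}_{p}}(\beta_{j}) = 1,
\end{equation*}
with each factor $y^{p} - y - \beta_{j}$ irreducible over $\Fq$ by Lemma~\ref{LLL.6}. Pulling this back via $y = \alpha^{-1}x$ produces, up to a unit scalar, a factorization of $f(x)$ into $p^{s-1}$ distinct irreducible factors $(\alpha^{-1}x)^{p} - \alpha^{-1}x - \beta_{j}$, each appearing with multiplicity $p^{\ell}$.

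Finally, Proposition~\ref{P2} says the generator polynomial of a polycyclic code is a monic divisor of its associated polynomial in $\Fq[x]$. Hence any generator of a $p^{\ell}$-trinomial code in $\Fq[x]/\langle f(x)\rangle$ has the claimed shape $\prod_{j=1}^{p^{s-1}}\bigl((\alpha^{-1}x)^{p} - \alpha^{-1}x - \beta_{j}\bigr)^{i}$ with $0 \leq i \leq p^{\ell}$ (understanding the exponent $i$ as permitted to depend on $j$), up to a unit. The main obstacle is the sign bookkeeping in the substitution: the displayed identity actually factors $x^{p^{\ell+s}} - x^{p^{\ell}} - b^{-1}$ rather than $x^{p^{\ell+s}} - x^{p^{\ell}} - b$, so to align with the statement one must either replace $\alpha = b^{-p^{-\ell}}$ by $\alpha = b^{p^{-\ell}}$ or use $y = \alpha x$ in place of $y = \alpha^{-1}x$; making the inverse/scalar choices consistent is the only delicate step in the argument.
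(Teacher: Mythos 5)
Your proof follows essentially the same route as the paper's: extract $\alpha$ from the $(1)\Leftrightarrow(5)$ equivalence of Theorem \ref{Th.1} (the paper invokes Corollary \ref{Cor.ver2}), collapse the exponents via $\alpha^{q}=\alpha$ to obtain $a=1$ and $\alpha^{p^{\ell}}=b^{-1}$, and then apply Theorem \ref{CCor.4}, the only difference being that the paper compresses this last substitution step into the single remark ``direct application of Theorem \ref{CCor.4}'' while you write it out. The $\alpha$-versus-$\alpha^{-1}$ mismatch you flag at the end is real, but it is an inconsistency in the corollary's own statement (which combines the $\alpha^{-1}x$ convention of Theorem \ref{Th.4} with the value $\alpha=b^{-p^{-\ell}}$ coming from the equivalence computation), not a gap in your argument, and either of your proposed fixes resolves it.
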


\begin{proof}
 Suppose that $(a,b) \sim_{(p^{\ell+s},p^{\ell})}(1,1)$ then  by Corollary \ref{Cor.ver2} there exists an $\alpha\in \Fq^*$ such that  
 $$ (b \alpha^{p^{s+\ell}}, a \alpha^{p^{s+\ell}-p^{\ell}})= (b \alpha^{ q p^{\ell}}, a \alpha^{(p^s-1)p^{\ell}}) =(b\alpha^{p^{\ell}}, a\alpha^{(q-1)^{p^{\ell}}})=(b\alpha^{p^{\ell}},a )=(1,1).$$
 Thus $a=1$ and $ \alpha= b^{-p^{-\ell}}$. The rest of the proof is a direct application of Theorem \ref{CCor.4}.
 
\qed

\end{proof}

\section{Examples of Families of Trinomial Codes}\label{sec:examples}

\noindent In this section, we give some examples of restricting the search space for good trinomial codes by applying the theory developed before. 

\begin{example}[$3^{\ell}$-trinomial codes of length $n=3^{\ell+1}$ over $\mathbb{F}_3$ ]
We consider $3^{\ell}$-trinomial  codes of length $n=3^{\ell+1}$  over $\mathbb{F}_3. $ By Theorem \ref{Th.1}, the number of $ (3^{\ell+1},3^{\ell})$-equivalence classes is equal to
$$ N= \frac{4}{ \lcm( \frac{2}{\gcd( 3^{\ell+1},2)}, \frac{2}{\gcd( 3^{\ell+1}-3^{\ell},2)} )}= \frac{4}{\lcm( 2, \frac{2}{\gcd(2.3^{\ell} ,2)} )} = 2 .$$
As $ \gcd( \frac{2}{\gcd( 3^{\ell+1},2)}, \frac{2}{\gcd( 3^{\ell+1}-3^{\ell},2)} )= ( 2, 1)=1$ -- according to Theorem \ref{Equiv_2} -- each $3^{\ell}$-trinomial  codes of length $n=3^{\ell+1}$ is equivalent to a  $3^{\ell}$-trinomial  code associated with $ x^{3^{\ell+1}}-x^{3^{\ell}}-1 $ or $ x^{3^{\ell+1}}- 2 x^{3^{\ell}}-1. $

According to Corollary \ref{Cor.ver2},
$(a_0,a_{\ell})\sim_{(3^{\ell+1}, 3^{\ell})}(1,1) $ if there is $\alpha \in \mathbb{F}_3^{*}$ such that 
 $$ (a_0,a_{\ell})\star(\alpha^{3^{\ell+1}}, \alpha^{3^{\ell+1}-3^{\ell}})=(a_0 \alpha^{3^{\ell+1}}, a_{\ell} \alpha^{2\cdot 3^{\ell}})=(a_0\alpha, a_{\ell})=(1,1).$$
Thus $ a_{\ell}=1$ and $ a_0= \alpha^{-1}$ for $\alpha=1,2$. Therefore, the equivalence class of $(1,1)$  consists of the   pairs 
$   ( 1, 1)$  and $(2 , 1)$.
Similarly the  class of  $(1,2)$ consists of the pairs
$  ( 1,2)$ and $ (2 , 2)  .$

Since $ \gcd(3,1)=1,$ by Corollary \ref{Cor.3}  the factorization of $x^{3^{\ell+1}} - x^{3^{\ell}} - 1$ is given by
$$ x^{3^{\ell+1}} - x^{3^{\ell}} - 1= (x^3-x-1)^{3^{\ell}} .$$  
According to Theorem \ref{Th.4}, each code in the class  of $(1,1)$ has a generator polynomial of the form $g(x)=  (x^3-x-1)^{i}$, for some $ 0 \leq  i \leq 3^{\ell}$.  
This class contains some optimal codes, for example, by taking   $\ell=2,$ we found that the $9$-trinomial code associated with $  x^{27} + 2x^9 + 2$ and generated by 
$g(x)=x^{24} + x^{22} + x^{21} + x^{20} + 2x^{19 }+ 2x^{18} + x^{16} + 2x^{15} + x^{14} + x^{11 }+ x^9
    + x^8 + x^7 + x^6 + 2x^5 + 2x^4 + x^2 + 2x + 1$
    is an optimal  $[27,3,18]_3$-code.\footnote{These codes attain the Griesmer bound.}

 For $\ell=3,$ we found that the $27$-trinomial code associated with $  x^{27} + 2x^9 + 2$ and generated by 
$g(x)=x^{78} + x^{76} + x^{75} + x^{74} + 2x^{73} + 2x^{72} + x^{70} + 2x^{69} + x^{68} + x^{65} + x^{63} + x^{62} + x^{61} + 2x^{60} + 2x^{59} + x^{57} + 2x^{56} + x^{55} + x^{52} + x^{50} + x^{49} + x^{48} + 2x^{47} + 2x^{46} + x^{44} + 2x^{43} + x^{42} + x^{39} + x^{37} + x^{36} + x^{35} + 2x^{34} + 2x^{33} + x^{31} + 2x^{30} + x^{29} + x^{26} + x^{23} + x^{21} + x^{20} + x^{19} + 2x^{18} + 2x^{17} + x^{15} + 2x^{14} + x^{13} + x^{10} + x^8 + x^7 + x^6 + 2x^5 + 2x^4 + x^2 + 2x + 1$
is an optimal $[81,3,55]_3$-code.\footnotemark[3]

 \end{example}

 \begin{example}[$3^{\ell}$-trinomial codes of length $n=3^{\ell+2}$ over $\mathbb{F}_9$]
 We consider $3^{\ell}$-trinomial codes of length $n=3^{\ell+2}$ over $ \mathbb{F}_{9}= \mathbb{F}_{3}(\xi)$ with $\xi $ a primitive element of $\mathbb{F}_9$. The number of $(3^{\ell+2},3^{\ell})$-equivalence classes is 
 $$ N= \frac{64}{ \lcm( \frac{8}{\gcd( 3^{\ell+2},8)}, \frac{8}{\gcd( 3^{\ell+2}-3^{\ell},8)} )}  = \frac{64}{\lcm( 8, \frac{8}{\gcd(8. 3^{\ell},8)} )}= \frac{64}{\lcm( 8, 1 )} = 8,$$ 
 and each class contains $8$ pairs. Since $\gcd( \frac{8}{\gcd(3^{\ell+2},8)}, \frac{8}{\gcd(8. 3^{\ell},8)} )= \gcd( 8,1 )=1,$ then by Theorem \ref{Equiv_2}, the $8$ possible pairs are 
 $$ (1,1), (\xi,1), (\xi^2,1), (\xi^3,1), (\xi^4,1),(\xi^5,1), (\xi^6,1), (\xi^7,1). $$
It follows that for $a,b \in \mathbb{F}_9^*$ the  $3^{\ell}$-trinomial code family associated with $x^{3^{\ell+2}}-  a x^{3^{\ell}}- b,$ is equivalent to a $3^{\ell}$-trinomial code family associated with one of the polynomials   $x^{3^{\ell+2}}-  \xi^j x^{3^{\ell}}-1, \ j=0,1,\ldots, 7.$

We now determine the class of $(1,1)$. According to Corollary \ref{Cor.ver2}, $(a_0,a_{\ell})\sim_{(3^{\ell+2}, 3^{\ell})}(1,1)$ if there exists $\alpha \in \mathbb{F}_9^{*}$ such that  
$$
(a_0,a_{\ell})\star(\alpha^{3^{\ell+2}}, \alpha^{3^{\ell+2}-3^{\ell}}) = (a_0 \alpha^{3^{\ell+2}}, a_{\ell} \alpha^{8 \cdot 3^{\ell}}) = (a_0\alpha^{3^{\ell}}, a_{\ell})=
 (1,1).
$$
Thus, $a_{\ell}=1$ and $a_0= \alpha^{-3^{\ell}}$. Therefore, the class of $(1,1)$ is  given by 
$$
 \{ (\alpha^{-3^{\ell}} , 1) : \alpha\in \mathbb{F}_9^{*} \} =  \{ ( 1, 1), (\xi,1), (\xi^2,1), (\xi^3,1), (\xi^4,1), (\xi^5,1), (\xi^6,1), (\xi^7,1) \}.
$$
Note that the second equality follows from the fact that the order of $\mathbb{F}_9^*$ is $8$ and thus coprime to $3^\ell$ for any $\ell$, which implies that all elements of $\mathbb{F}_9^*$ appear in the first coordinate.

 According to  Theorem \ref{CCor.4},  the factorization of $x^{3^{\ell+2}} - x^{3^{\ell}} - 1$ is given by:
$$ x^{3^{\ell+2}} - x^{3^{\ell}} - 1= (x^9-x-1)^{3^{\ell}} = \prod_{j=1}^{3} \left( x^3 - x - \beta_j \right)^{3^{\ell}},
$$
for some $\beta \in \Fq^*$ with $\operatorname{Tr}_{\mathbb{F}_9 / \mathbb{F}_3}(\beta_j) = 1$. 
It follows that 
$$ x^{3^{\ell+2}} - x^{3^{\ell}} - 1= \left( x^3 - x - \xi \right)^{3^{\ell}} \left( x^3 - x - 2 \right)^{3^{\ell}} \left( x^3 - x - \xi^3 \right)^{3^{\ell}} $$
Hence, the $3^{\ell}$-trinomial codes of length $n=3^{\ell+2}$ over $ \mathbb{F}_{9}$ which are  equivalent to  $3^{\ell}$-trinomial codes associated with $ x^{3^{\ell+2}}-x^{3^{\ell}}-1$
will have a generator polynomial of the form  
$$g(x)=  \left( x^3 - x - \xi \right)^{i} \left( x^3 - x - 2 \right)^{j} \left( x^3 - x - \xi^3 \right)^{h}  ,\quad 0 \leq  i,j,h \leq 3^{\ell}.$$ 

For $\ell=1, $ we constructed the $ 3-$trinomial code associated to $x^{27}-x^3-1$,  and generated by $g(x)=x^3 -x + \xi^7$ which is an optimal $[27,24,3]_9$-code.\footnote{These codes attain an upper bound on the minimum distance according to \cite{databases}.}

For $\ell=2, $ we constructed the $ 9-$trinomial code associated to $x^{81}-x^9-1$,  and generated by $g(x)=x^3 -x + \xi^5$ which is an optimal $[81,78,3]_9$-code.\footnotemark[4]

\end{example}

\begin{example}[Trinomial codes of length $27$ over $\mathbb{F}_4$]
We consider the  case of $\ell$-trinomial codes of length $27$ over $\mathbb{F}_4= \mathbb{F}_2(\xi),$ with $\xi $ a primitive element of $ \mathbb{F}_4$.   The number of $(27, \ell)$-equivalence classes is given by
$$
\frac{3^2}{\lcm\left(\frac{3}{\gcd(27, 3)}, \frac{3}{\gcd(27-\ell, 3)}\right)} = \frac{9}{\lcm\left(1, \frac{3}{\gcd(27-\ell, 3)}\right)}.
$$
We have two cases:

\begin{enumerate}
    \item If $\ell \equiv 0 \pmod{3}$: 
    The number of $(27, \ell)$-equivalence classes is $9$. In this case, we consider all pairs $(a, b)$ from $\mathbb{F}_4^* \times \mathbb{F}_4^*$.
    
    \item If $\ell \not\equiv 0 \pmod{3}$: 
    The number of $(27, \ell )$-equivalence classes is $3$. Since
    $$
    \gcd\left(\frac{3}{\gcd(27, 3)}, \frac{3}{\gcd(27-\ell, 3)}\right) = \gcd(1, 3) = 1,
    $$
   then -- by Theorem \ref{Equiv_2} -- the representatives of these three $(n, \ell)$-equivalence classes are the pairs $(1, 1)$, $(1, \xi)$, and $(1, \xi^2)$. So we need to consider the three polynomials
   $ x^{27}-x^{\ell}-1,\ \ x^{27}-x^{\ell}-\xi , $ and $x^{27}-x^{\ell}-\xi^2.$
   
   For $\ell=8 ,$ we found that the $8$-trinomial code associated to $ x^{27} - x^8 - \xi^2$ and generated by 
   $ x^6 + \xi^2 x^5 + \xi^2x^3 + \xi x^2 + x +1 $ is a $[27, 21, 4]_4$-code, which equals the best known parameters according to \cite{databases}.
   
   For $\ell=5,$ the $5$-trinomial code associated with $ x^{27} - x^5 - 1$ and generated by $ x^{10} + x^9 + x^8 + x^7 + x^6 + \xi^2x^5 + x^4 +\xi^2 x^2 + \xi x + 1$ is a $ [27,17,6]_4 $-code, which equals the best known parameters according to \cite{databases}.
\end{enumerate}

\end{example}

\section{Equivalence of Polycyclic Codes}\label{sec:eq_poly}
 In this section we generalize the results on equivalence to general polycyclic codes. We start with the generalized definition of equivalence, for which we will use the notation $\vec{a}(x) := a_0 + a_1 x + \ldots + a_{n-1} x^{n-1}$.
 
 \begin{definition}
Let $ \vec{a} = (a_0, a_1, \ldots, a_{n-1}) $ and $ \vec{b} = (b_0, b_1, \ldots, b_{n-1}) $ be elements in $\mathbb{F}_q^n .$
We say that $ \vec{a} $ and $ \vec{b} $ are \textit{$ n $-equivalent}, and we denote this by 
$$ \vec{a} \sim_n \vec{b}, $$ 
if there exists an 
$ \alpha \in \mathbb{F}_q^* $ such that the following map

  \begin{equation}
  	\begin{array}{cccc}
  	\varphi_{\alpha}:& \mathbb{F}_q[x] /\langle x^n-\vec{b}(x) \rangle  &\longrightarrow & \mathbb{F}_q[x] /\langle x^n-\vec{a}(x) \rangle, \\ 
  	& f(x) & \longmapsto &  f(\alpha x),
  	\end{array}
  	\end{equation} 
is an $\Fq$-algebra isomorphism. 
 \end{definition}
 Note that, as before, the map $\varphi_\alpha$ is a Hamming isometry.  Moreover, we can easily verify that $ " \sim_n "$ is an equivalence relation.

 \begin{remark}
 \begin{enumerate}
     \item  If $ \vec{a} = (\lambda, 0, \ldots, 0) $ and $ \vec{b} = (\mu, 0, \ldots, 0) $, then we recover the case of $n$-equivalence for constacyclic codes studied in \cite{Chen2014}.
     \item  If $ \vec{a} = (a_0, 0, \ldots,a_{\ell},0,\ldots, 0) $ and  $ \vec{b} = (b_0, 0, \ldots,b_{\ell},0,\ldots, 0) $, then we recover the case of $(n,\ell)$-equivalence for $\ell$-trinomial  codes studied in Section 3.
     \end{enumerate}
 \end{remark}

 In the following we show that this notion of equivalence automatically implies that the vectors $\vec{a}$ and $\vec{b}$ have zero entries in exactly the same position. This implies that $\ell$-trinomial code families can only be equivalent to other $\ell$-trinomial code families.
 
 \begin{lemma}
    Let $ \vec{a} = (a_0, a_1, \ldots, a_{n-1}) $ and $ \vec{b} = (b_0, b_1, \ldots, b_{n-1}) $ be elements in $\mathbb{F}_q^n $ such that  $ \vec{a} \sim_n\vec{b} $. Then $ a_i \neq 0 $ if and only if $ b_i \neq 0 $, for any $ 0\leq  i\leq n-1,$  and so $\vec{a}$ and $\vec{b}$ have the same Hamming weight.
 \end{lemma}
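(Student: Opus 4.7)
The plan is to exploit the fact that the well-definedness of $\varphi_\alpha$ as a ring map forces a rigid coordinate-wise relation between $\vec a$ and $\vec b$, from which the conclusion is immediate. So I would not need any fancy machinery beyond the definition; the whole argument is essentially a computation of $\varphi_\alpha(x^n - \vec b(x))$ modulo $x^n - \vec a(x)$.

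First, since $\varphi_\alpha$ is an $\mathbb{F}_q$-algebra homomorphism from $\mathbb{F}_q[x]/\langle x^n - \vec b(x)\rangle$ into $\mathbb{F}_q[x]/\langle x^n - \vec a(x)\rangle$, the element $x^n - \vec b(x)$ must be sent to zero. I would compute
\[
\varphi_\alpha\bigl(x^n - \vec b(x)\bigr) = \alpha^n x^n - \vec b(\alpha x) = \alpha^n x^n - \sum_{i=0}^{n-1} b_i \alpha^i x^i.
\]
Then I would use the defining relation $x^n \equiv \vec a(x) = \sum_{i=0}^{n-1} a_i x^i$ in the codomain to rewrite this as
\[
\sum_{i=0}^{n-1}\bigl(\alpha^n a_i - \alpha^i b_i\bigr)\,x^i \pmod{x^n-\vec a(x)}.
\]

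The key observation is that this representative has degree strictly less than $n$, so it can only vanish in the quotient if it is the zero polynomial in $\mathbb{F}_q[x]$. Comparing coefficients gives
\[
\alpha^n a_i = \alpha^i b_i \quad \text{for all } i = 0,1,\ldots,n-1,
\]
or equivalently $b_i = \alpha^{\,n-i} a_i$. Since $\alpha \in \mathbb{F}_q^*$, the scalar $\alpha^{n-i}$ is nonzero, hence $a_i \neq 0$ if and only if $b_i \neq 0$, which immediately yields that $\vec a$ and $\vec b$ share the same support and therefore the same Hamming weight.

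There is really no main obstacle here: everything reduces to a direct coefficient comparison. The only subtle point worth noting explicitly in the write-up is why the expression $\sum_i(\alpha^n a_i - \alpha^i b_i)x^i$ must vanish as a polynomial rather than merely as a class in the quotient, which follows from the degree bound. As a bonus remark, the same identity $b_i = \alpha^{n-i} a_i$ foreshadows the Schur-product characterization used in Theorem \ref{Th.1}, generalizing the $(n,\ell)$-case to arbitrary polycyclic associated vectors.
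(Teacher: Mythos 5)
Your proposal is correct and follows essentially the same route as the paper: the paper also derives the identity $b_i = \alpha^{n-i}a_i$ for all $i$ by applying $\varphi_\alpha$ to the defining relation and comparing coefficients of the degree-$(<n)$ representatives (it does so by citing the step (1) $\Rightarrow$ (2) of Theorem \ref{Th.1}), and then concludes exactly as you do from $\alpha^{n-i}\neq 0$. Your explicit remark about why the representative must vanish as a polynomial, not merely as a residue class, is a worthwhile clarification but not a different argument.
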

\begin{proof}
Suppose that $ \vec{a} \sim_n \vec{b},$ then there is $\alpha \in \Fq^*$ such that $ \varphi_{\alpha}$ is an $\Fq$-algebra isometry between $ \mathbb{F}_q[x] /\langle x^n-\vec{b}(x) \rangle $  and $ \mathbb{F}_q[x] /\langle x^n-\vec{a}(x) \rangle$. Then, as in the proof of [Theorem \ref{Th.1}, (1) $\Rightarrow$ (2)], we obtain that 
$$ b_i = \alpha^{n-i} a_i, \quad \forall i=0, \ldots, n-1.$$
Hence the result holds.
    \qed
\end{proof}

We now generalize Theorem \ref{Th.1} to the general polycyclic case: 

\begin{theorem}\label{Th_general}
Let $ \vec{a}=(a_0,a_1,\ldots, a_{n-1}), \vec{b}=(b_0,b_1,\ldots, b_{n-1})\in \mathbb{F}_q^{n}$ have non-zero entries in the same $m$ positions, i.e., $a_{i_j}$ and $b_{i_j}$ are non-zero  for $0\leq i_0<\dots<i_{m-1}\leq n-1$. Moreover, let $\xi $ be a primitive element of $\Fq$. Then the following statements are equivalent:
\begin{enumerate}
\item $ \vec{a} \sim_{n} \vec{b}$. 

\item The polynomials $ a_{i_j} x^{n-i_j} - b_{i_j} \in \mathbb{F}_q[x]$, for $ j \in \{0,1,\ldots,m-1\}$, have a common root in $\mathbb{F}_q^*$.

\item The polynomial $ \gcd( a_{i_0} x^{n-i_0} - b_0, a_{i_1} x^{n-i_1} - b_{i_1}, \ldots, a_{i_{m-1}} x^{n-i_{m-1}} - b_{i_{m-1}})$ has at least one root in $\mathbb{F}_q^*$.

\item The polynomial $ \gcd_{\{0 \leq j \leq m-1\}}( x^{n-{i_j}} - b_{i_j} a_{i_j}^{-1}) $ has at least one root in $\mathbb{F}_q^*$.
 \item  There exists $\alpha\in \Fq^*$ such that $$ (a_{i_0}, a_{i_1},\ldots, a_{i_{m-1}})\star(\alpha^{n-{i_0}}, \alpha^{n-{i_1}} , \ldots, \alpha^{n-{i_{m-1}}})=  ( b_{i_0}, b_{i_1},\ldots, b_{i_{m-1}}).$$
\item $ (a_{i_0}, a_{i_1},\ldots, a_{i_{m-1}})^{-1}\star( b_{i_0}, b_{i_1},\ldots, b_{i_{m-1}}) \in H,$ where $H$ is the cyclic subgroup of $ (\Fq^*)^m$ generated by 
$( \xi^{n-{i_0}},\xi^{n-{i_1}},\ldots, \xi^{n-{i_{m-1}}} ).$
\end{enumerate}
In particular the number of $n$-equivalence classes is
$$ N=\dfrac{ (q-1)^m}{ \lcm_{ (0 \leq j\leq m-1)}\left(\frac{q-1}{ \gcd(n-i_j,q-1)}\right)}. $$ 
\end{theorem}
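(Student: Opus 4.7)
The plan is to prove the theorem by establishing the cyclic chain of implications (1) $\Rightarrow$ (2) $\Rightarrow$ (3) $\Rightarrow$ (4) $\Rightarrow$ (5) $\Rightarrow$ (6) $\Rightarrow$ (1), and then deduce the counting formula from the equivalence of (1) and (6). The overall structure mirrors the proof of Theorem \ref{Th.1}, with the pair of indices $\{0,\ell\}$ replaced by the set $\{i_0,\ldots,i_{m-1}\}$ of positions where both $\vec{a}$ and $\vec{b}$ have nonzero entries (which coincide thanks to the preceding lemma).

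For (1) $\Rightarrow$ (2) I would apply the isomorphism $\varphi_\alpha$ to the generator of the ideal: since $x^n - \vec{b}(x) \equiv 0$ in the source ring, $\varphi_\alpha(x^n - \vec{b}(x)) \equiv 0 \pmod{x^n - \vec{a}(x)}$. Expanding gives $\alpha^n x^n \equiv \sum_j b_{i_j} \alpha^{i_j} x^{i_j}$ in the target ring, which after substituting $x^n \equiv \vec{a}(x)$ yields $\sum_j \alpha^n a_{i_j} x^{i_j} \equiv \sum_j b_{i_j} \alpha^{i_j} x^{i_j}$. Since the exponents $i_j$ all lie in $\{0,\ldots,n-1\}$, the monomials $x^{i_j}$ are $\Fq$-linearly independent in the quotient, so comparing coefficients term by term gives $a_{i_j}\alpha^{n-i_j} = b_{i_j}$ for every $j$, which exhibits $\alpha$ as a common root. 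Steps (2) $\Rightarrow$ (3) $\Rightarrow$ (4) are immediate from the definition of $\gcd$; (4) $\Rightarrow$ (5) follows by rewriting the root condition as a Schur-product identity; and (5) $\Rightarrow$ (6) comes from writing $\alpha = \xi^h$ and noting that the resulting tuple is then the $h$-th power of the stated generator in $(\Fq^*)^m$.

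For (6) $\Rightarrow$ (1) I would, given $\beta=\xi^h$ with $a_{i_j}\beta^{n-i_j}=b_{i_j}$ for all $j$, define $\tilde{\varphi}_\beta: \Fq[x]\to \Fq[x]/\langle x^n-\vec{a}(x)\rangle$ by $f(x)\mapsto f(\beta x)$ and invoke the first isomorphism theorem. Surjectivity is clear since $x^i$ has the preimage $\beta^{-i}x^i$. The key verification is that $\tilde{\varphi}_\beta(x^n - \vec{b}(x)) = \beta^n x^n - \sum_j b_{i_j}\beta^{i_j} x^{i_j} = \beta^n \bigl(x^n - \sum_j a_{i_j} x^{i_j}\bigr) \equiv 0$ in the target, which shows $\langle x^n - \vec{b}(x)\rangle \subseteq \ker\tilde{\varphi}_\beta$. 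The reverse inclusion is obtained by the same substitution trick as in the proof of Theorem \ref{Th.1}: if $f(\beta x) = g(x)(x^n - \vec{a}(x))$ in $\Fq[x]$, substitute $\beta^{-1}x$ for $x$ and factor out $\beta^{-n}$ to recover $f(x) = \beta^{-n} g(\beta^{-1}x)(x^n - \vec{b}(x))$ using the equalities $a_{i_j}\beta^{n-i_j} = b_{i_j}$.

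For the counting formula, the equivalence (1) $\Leftrightarrow$ (6) identifies the set of equivalence classes with the quotient group $(\Fq^*)^m / H$, so its cardinality equals $(q-1)^m / |H|$. The order of the generator $(\xi^{n-i_0},\ldots,\xi^{n-i_{m-1}})$ is the least common multiple of the individual orders of $\xi^{n-i_j}$ in $\Fq^*$, namely $(q-1)/\gcd(n-i_j,q-1)$, which yields the stated formula. The main obstacle I foresee is the bookkeeping in (6) $\Rightarrow$ (1), specifically verifying $\ker\tilde{\varphi}_\beta \subseteq \langle x^n - \vec{b}(x)\rangle$ cleanly with an arbitrary number of nonzero positions; but the substitution argument from the $\ell$-trinomial case carries over verbatim since it only uses that $\beta^{n-i_j}a_{i_j}=b_{i_j}$ holds on each support index.
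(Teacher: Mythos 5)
Your proposal is correct and follows essentially the same route as the paper: the same cyclic chain (1)$\Rightarrow$(2)$\Rightarrow\cdots\Rightarrow$(6)$\Rightarrow$(1), the same coefficient comparison after applying $\varphi_\alpha$ to $x^n-\vec{b}(x)$, the same first-isomorphism-theorem argument with the kernel computation carried over from the trinomial case, and the same identification of the class count with $|(\Fq^*)^m/H|$. Your explicit remark that the monomials $x^{i_j}$ are linearly independent in the quotient is a small but welcome justification that the paper leaves implicit.
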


\begin{proof}
\begin{enumerate}
    \item [(1) $\Rightarrow$ (2)] 
    Suppose that $ \vec{a} \sim_{n} \vec{b} $, then there is $\alpha \in \mathbb{F}_q^*$ such that

$$\varphi_{\alpha} : \mathbb{F}_q[x] /\langle x^n - \vec{b}(x) \rangle \to \mathbb{F}_q[x] /\langle x^n - \vec{a}(x) \rangle, \quad f(x) \mapsto f(\alpha x) $$
is an $\mathbb{F}_q$-algebra isometry.  It follows that

$$ 
\varphi_{\alpha}(x^k) =  \alpha^k x^{k} \mod (x^n - \vec{a}(x)), \quad \forall  \ k = 0, 1, \ldots, n-1.
$$

As $\varphi_{\alpha}$ is an $\mathbb{F}_q$-algebra isometry and $ \varphi(x^n - \vec{b}(x)) = 0 \mod (x^n - \vec{a}(x))$, then
\begin{equation}
\varphi_{\alpha}(x^n) = \varphi_{\alpha}(\vec{b}(x)) = b_0 + \alpha b_1 x + \ldots + \alpha^{n-1}b_{n-1} x^{n-1}, \mod (x^n - \vec{a}(x)).
\end{equation}

On the other hand,
\begin{equation}
\varphi_{\alpha}(x^n) = \alpha^n x^n = \alpha^n ( a_0 + a_1 x + \ldots + a_{n-1} x^{n-1}), \mod (x^n - \vec{a}(x))
\end{equation}

Comparing term by term, we deduce that for any $ i \in \{0, 1, \ldots, n-1\}, \ a_i \alpha^{n-i} = b_i$, which means that $\alpha$ is a common root of the polynomials $ a_i x^{n-i} - b_i $, for $ i \in \{0, 1, \ldots, n-1\}$. As  $ a_{i_j}$'s and $  b_{i_j}$'s are the non-zeros components of $\vec{a}$ and $\vec{b},$   then $ a_{i_j} \alpha^{n-{i_j}} = b_{i_j},\ j=0,\ldots,m-1$.

\item[(2) $\Rightarrow$ (3)] and  (3) $\Rightarrow$ (4) are immediate. 

 \item [(4) $\Rightarrow$ (5)]
    Let $\alpha $ be a root of the polynomial   $\gcd_{\{0 \leq i \leq n-1, \ a_i \neq 0\}}( x^{n-i} - b_i a_i^{-1}) $. Then $\alpha$ is a common root of the polynomials $  x^{n-i_j} - b_{i_j} a_{i_j}^{-1} \  \text{for any} \ j\in \{ 0,1,\ldots, m-1\},$ and so   $a_{i_j} \alpha^{n-i_j}= b_{i_j}$.
    It follows that 

$$  (b_{i_0}, b_{i_1},\ldots, b_{i_{m-1}} )=  (\alpha^{n-i_0}, 
\alpha^{n-i_1} , \ldots, \alpha^{n-i_{m-1}})\star( a_{i_0},  a_{i_1}, \ldots, a_{i_{m-1}}).$$ 
\item [(5) $\Rightarrow$ (6)]
Suppose that there is $\alpha\in \Fq^{*}$ such that 
$$  (b_{i_0}, b_{i_1},\ldots, b_{i_{m-1}} )=  (\alpha^{n-i_0}, 
\alpha^{n-i_1} , \ldots, \alpha^{n-i_{m-1}})\star( a_{i_0},  a_{i_1}, \ldots, a_{i_{m-1}}).$$ 
For $\alpha= \xi^h,$ we obtain that 
$$  ( a_{i_0}, a_{i_1},\ldots, a_{i_{m-1}})^{-1}\star(b_{i_0}, b_{i_1},\ldots, b_{i_{m-1}} ) =   (\xi^{n-{i_0}}, \xi^{n-{i_1}} , \ldots, \xi^{n-i_{m-1}})^h .$$
It follows that $  ( a_{i_0},  a_{i_1}, \ldots, a_{i_{m-1}})^{-1}\star( b_{i_0},  b_{i_1}, \ldots, b_{i_{m-1}}) $ belongs to the cyclic subgroup  $H$ of $ (\Fq^{*})^m$ generated by $ (\xi^{n-{i_0}}, \xi^{n-{i_1}} , \ldots, \xi^{n-{i_{m-1}}}) .$
    \item [(6) $\Rightarrow$ (1)]  
     Suppose that $ (a^{n-{i_0}}, a^{n-{i_1}} , \ldots, a^{n-i_{m-1}})^{-1}\star(b^{n-{i_0}}, b^{n-{i_1}} , \ldots, b^{n-i_{m-1}}) $ is an element of the cyclic subgroup  $H$ of $ (\Fq^{*})^m$ generated by
$ (\xi^{n-{i_0}}, \xi^{n-{i_1}} , \ldots, \xi^{n-{i_{m-1}}}) $. Then there exists an integer $ h$ such that 
     $$  ( a_{i_0},  a_{i_1}, \ldots, a_{i_{m-1}})^{-1}\star( b_{i_0},  b_{i_1}, \ldots, b_{i_{m-1}}) =  (\xi^{n-{i_0}}, \xi^{n-{i_1}} , \ldots, \xi^{n-{i_{m-1}}})^h $$
    For $ \beta= \xi^{h}, $ we obtain  that $  a_j \beta^{n-j}= b_j, $ for any $ j\in \{ i_0,i_1,\ldots, i_{m-1}\}.$
As in the proof of Theorem \ref{Th.1}, we verify that  $ \varphi_{\beta}, $ as follows: 
  \begin{equation}
        \begin{array}{cccc}
        \varphi_{\beta} : & \mathbb{F}_q[x]/\langle x^n - \vec{b}(x) \rangle,  & \longrightarrow & \mathbb{F}_q[x] /\langle x^n - \vec{a}(x) \rangle, \\ 
        & f(x) & \longmapsto & f(\beta x),
        \end{array}
    \end{equation} 
	
	is an $\Fq$-algebra isometry with respect to the Hamming distance. 
\end{enumerate}
\noindent By the equivalence between (1) and (6), we deduce that the number of $n$-equivalence classes on on $(\Fq^*)^m$ corresponds to the order of the group $ (\Fq^*)^m/H,$ which equals $$ N=\dfrac{ (q-1)^m}{ \lcm_{ (0 \leq j\leq m-1)}\left(\frac{q-1}{ \gcd(n-i_j,q-1)}\right)}. $$ 

\qed
\end{proof}

Similarly to the case of $\ell$-trinomial codes, Theorem \ref{Th_general} implies the following results regarding the equivalence of polycyclic codes. The proofs are analogous to the trinomial case. 

\begin{corollary}\label{Equiv_General2}
As before let $ \vec{a}=(a_0,a_1,\ldots, a_{n-1})$ and $ \vec{b}=(b_0,b_1,\ldots, b_{n-1})$ be elements of $\mathbb{F}_q^{n}$  
of the same weight $m$  and denote by $ a_{i_j}$ and $  b_{i_j}$  the non-zeros components of $\vec{a}$ and $\vec{b}$. 
\begin{enumerate}
    \item      The class of polycyclic  codes associated with the polynomial $ x^n-\dsum{j=0}{m-1} a_{i_j} x^{i_j} $ is equivalent to the class of  polycyclic  codes associated with the polynomial  $x^n-\dsum{j=0}{m-1}  x^{i_j}$  if and only if there exists $\alpha \in \Fq^*$ such that 
    $$ ( a_{i_0}, a_{i_1},\ldots, a_{i_{m-1}})\star( \alpha^{n-i_0}, \alpha^{n-i_1},\ldots, \alpha^{n-i_{m-1}}) = (1, 1,\ldots, 1 ).$$  
    \item  Let  $ d=\gcd_{ 0\leq j\leq m-1} \left( \frac{q-1}{ \gcd(n-i_j,q-1)}\right) ,$ then the  class of polycyclic  codes associated with the  polynomial $ x^n-\dsum{j=0}{m-1} a_{i_j} x^{i_j} $ is equivalent to the class of  polycyclic  codes associated with the polynomial  $ x^n-\dsum{j=0}{m-1} \xi^{k_j} x^{i_j} ,$ for $k_j=0,1,\ldots, \gcd(n-i_j,q-1)-1 ,$  where $\xi$ is a primitive element of $ \Fq$. 
\end{enumerate}
    \end{corollary}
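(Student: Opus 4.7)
The overall strategy is to reduce both parts to the characterizations provided by Theorem \ref{Th_general}, in complete analogy with the trinomial arguments used in Corollary \ref{Cor.ver2} and Theorem \ref{Equiv_2}.

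For Part 1, the plan is to apply the equivalence $(1)\Leftrightarrow(5)$ of Theorem \ref{Th_general} with $\vec{b}\in\Fq^{n}$ defined by $b_{i_j}=1$ for $j=0,1,\ldots,m-1$ and $b_i=0$ otherwise. Then $\vec{a}$ and $\vec{b}$ share exactly the same support, so Theorem \ref{Th_general} applies, and the statement $\vec{a}\sim_n\vec{b}$ translates directly to the existence of $\alpha\in\Fq^*$ satisfying
$$(a_{i_0},\ldots,a_{i_{m-1}})\star(\alpha^{n-i_0},\ldots,\alpha^{n-i_{m-1}})=(1,\ldots,1),$$
which is exactly the stated equivalence.

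For Part 2, I would imitate the partition argument used in Theorem \ref{Equiv_2}. Let $d_j:=\gcd(n-i_j,q-1)$ and let $H$ be the cyclic subgroup of $(\Fq^*)^m$ generated by $(\xi^{n-i_0},\ldots,\xi^{n-i_{m-1}})$, as in Theorem \ref{Th_general}. Coordinate-wise, $H$ projects onto $\langle\xi^{n-i_j}\rangle$, a subgroup of index $d_j$ in $\Fq^*$, so the elements $1,\xi,\ldots,\xi^{d_j-1}$ form a complete system of representatives of $\Fq^*/\langle\xi^{n-i_j}\rangle$ in each coordinate. The aim is to exhibit a coset decomposition
$$(\Fq^*)^m \;=\; \bigcup_{k_0=0}^{d_0-1}\cdots\bigcup_{k_{m-1}=0}^{d_{m-1}-1}(\xi^{k_0},\ldots,\xi^{k_{m-1}})\,H,$$
after which the equivalence $(1)\Leftrightarrow(6)$ of Theorem \ref{Th_general} yields the claim: every $\vec{a}$ is $n$-equivalent to some tuple $(\xi^{k_0},\ldots,\xi^{k_{m-1}})$ with $0\le k_j\le d_j-1$, and consequently the two associated polycyclic code families are equivalent.

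The main obstacle is that this partition is exact only when $d:=\gcd_{0\le j\le m-1}\bigl((q-1)/d_j\bigr)=1$: in that case a Chinese-remainder-type argument identifies $H$ with the full direct product $\prod_j\langle\xi^{n-i_j}\rangle$, so the $\prod_j d_j$ candidate representatives meet every coset exactly once and Part 2 follows. When $d\neq 1$, $H$ is a proper subgroup of $\prod_j\langle\xi^{n-i_j}\rangle$ of index $d$, and one would need to augment the representatives by an extra index $h\in\{0,1,\ldots,d-1\}$ — producing tuples of the form $(\xi^{k_0+h(n-i_0)},\xi^{k_1},\ldots,\xi^{k_{m-1}})$ exactly as in Case 2 of Theorem \ref{Equiv_2} — so as to cover all $N=d\cdot d_0\cdots d_{m-1}$ equivalence classes. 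The delicate step is verifying this refined partition by comparing the count $\prod_j d_j$ (resp.\ $d\cdot\prod_j d_j$) with the value of $N$ from Theorem \ref{Th_general} and checking disjointness of the cosets using the structure of $H$; the remaining implications then follow mechanically from the equivalent conditions already established in Theorem \ref{Th_general}.
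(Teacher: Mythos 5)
Your proposal follows the paper's proof essentially verbatim: Part 1 is obtained from the equivalence $(1)\Leftrightarrow(5)$ of Theorem \ref{Th_general} applied to the all-ones vector on the same support, and Part 2 uses the same coset partition of $(\Fq^*)^m$ modulo $H$ with the same case split $d=1$ versus $d\neq 1$ and an extra index $h$ in the latter case. The points you flag as delicate (verifying the refined partition and the class count $d\cdot d_0\cdots d_{m-1}$) are asserted without further verification in the paper as well, so your level of detail matches the original.
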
 
    \begin{proof}
        \begin{proof}
        
\begin{enumerate}

\item Follows from the fifth assertion of  Theorem of \ref{Th_general}.
\item Let $d= \gcd_{ 0\leq j\leq m-1} \left( \frac{q-1}{ \gcd(n-i_j,q-1)}\right).$
\begin{itemize}
    \item  If $ d = 1 $, then the cyclic group $ H $ generated by $ (\xi^{n-{i_0}}, \xi^{n-{i_1}} , \ldots, \xi^{n-{i_{m-1}}}) $ is isomorphic to the group $ \langle \xi^{n-{i_0}} \rangle \times \langle \xi^{n-{i_1}} \rangle \times \ldots \times \langle \xi^{n-{i_{m-1}}} \rangle $ and has order $ \frac{(q-1)^m}{  d_0 d_{1} \ldots d_{m-1}}, $ with $ d_j=\gcd(n-i_j,q-1)$ for $j=0,\ldots, m-1$. By Theorem \ref{Th_general}, the number of $ n$-equivalence classes is $ d_0 d_{1} \ldots d_{m-1} $. Therefore, we can partition $ (\Fq^*)^m $ as 
$$
(\Fq^*)^m =  \bigcup_{k_0=0}^{d_0-1} \bigcup_{k_{1}=0}^{d_{1}-1} \ldots \bigcup_{k_{m-1}=0}^{d_{m-1}-1} (\xi^{k_0}, \xi^{k_1},\ldots, \xi^{k_{m-1}}) H.
$$
So the result holds.
 \item 
 If $d\neq 1, $ the number of  $(n,\ell)$-equivalence classes is $ d d_0 d_{1} \ldots d_{m-1},$ and so we partition $ (\Fq^*)^m  $ as 
 
$$
(\Fq^*)^m  = \bigcup_{h=0}^{d-1} \bigcup_{k_0=0}^{d_0-1} \bigcup_{k_{1}=0}^{d_{1}-1} \ldots \bigcup_{k_{m-1}=0}^{d_{m-1}-1} (\xi^{k_0+hn}, \xi^{k_1},\ldots, \xi^{k_{m-1}}) H.
$$
which implies the result.
\end{itemize}

\end{enumerate}
  \qed
\end{proof}
    \end{proof}

\begin{example}[Polycyclic codes of length $n=12$ over $\mathbb{F}_3$ ] 
 We consider polycyclic  codes of length $n=12$ over $ \mathbb{F}_{3}$ associated with  a polynomial of the form $f(x)=x^{12}- cx^7-bx-a\in \mathbb{F}_3[x]$.  Denote $\vec{a}(x):=  cx^7+bx+a $, then  the Hamming weight of $ \vec{a}(x) $ is  $3$ and according to Theorem  \ref{Th_general}, the number of $12$-equivalence classes is 
 $$ N= \frac{2^3}{ \lcm( \frac{2}{\gcd( 12,2)}, \frac{2}{\gcd( 12-1,2)}, \frac{2}{\gcd( 12-7,2)} ) }  =
 \frac{2^3}{ \lcm( 1, 2, 2 ) } = 4.$$ 
 Since $$\gcd(  \frac{2}{\gcd( 12,2)}, \frac{2}{\gcd( 12-1,2)}, \frac{2}{\gcd( 12-7,2)}  )= \gcd( 1,2,2 )=1,$$ then by Corollary \ref{Equiv_General2}, 
 each  polycyclic codes associated with a polynomial $f$ of the form $f =x^{12}- cx^7-bx-a,$ is equivalent to a polycyclic  code  associated with one of the following  polynomials: 
 $$f_1= x^{12}-x^7-x-1, \quad f_2= x^{12}- x^{7}-x-\xi, \quad f_3= x^{12}-  \xi x^{7}-x-1, \quad  f_4= x^{12}-  \xi x^{7}-x-\xi,$$
 where $\xi=2$ is (the only) primitive element of $\mathbb{F}_3$. We then searched for good codes in the corresponding spaces. According to Codes Tables \cite{databases} some of these codes are optimal  (for given $n,k$ and $q$); we present these code parameters in  Table \ref{Tab1}.

 \begin{table}[h]
		\small{	\begin{tabular}{|l |l|c|}
						\hline 
		Class polynomial $f(x)$		&	 Generator polynomial of the polycyclic code & Parameters\\
						\hline 
	
  $ f_1(x)=x^{12}-x^7-x-1$&	$ x^{10} + 2x^8 + x^6 + 2x^5 + 2x^4 + x^3 + x^2 +2x + 2 $ &  $ [12, 2, 9]_{3} $\\
							\cline{2-3}
		  	 	&	$ x^8 + x^6 + 2 x^3 + 2x^2 + 2x + 2 $   &  $ [12, 4, 6]_{3}$\\
					\hline 					
											
    $ f_3(x)=x^{12} - 2  x^{7} -x-1 $	 	&	$  x^2 + 2x + 2  $   &  $ [12, 10, 2]_{3}$\\
							\cline{2-3}
						&	$ x^7 + x^6 + 2x^5 + x^4 + 2x^2 + 2 $& $ [12, 5, 6]_{3} $\\
           \cline{2-3}
						&	$ x^5 + x^3 + x^2 + 2 x + 1 $& $ [12, 7, 4]_{3} $\\
     
    \cline{2-3}
						&	$  x^4 + x^3 + 2 x^2 + 2 $& $ [12, 8, 3]_{3} $\\
   \hline 					
											
    $ f_4(x)=x^{12} - 2  x^{7} -x-2  $	 	&	$  x^3 + x^2 + 2 $   &  $ [12, 9, 3]_{3}$\\                     
                    
	\hline 						
\end{tabular} }

\caption{Optimal polycyclic codes of length $n=12$ over   $\mathbb{F}_{3} $.}
\label{Tab1}
\end{table}
\end{example}

\begin{example}[Polycyclic codes of length $n=15$ over $\mathbb{F}_4$ ] 

 We consider skew polycyclic  codes of length $n=15$ over $ \mathbb{F}_{4}=\mathbb{F}_2(\xi),\ \xi$ a primitive element of $\mathbb{F}_4,$  associated with  a polynomial of the form $f(x) =x^{15}-cx^h-bx^l-a\in \mathbb{F}_4[x],$ with $l,h$  are integers such that $0<l<h <15$.  Let $\vec{a}(x):=  cx^h+ bx^l+a ,$  the Hamming weight of $ \vec{a}(x) $ is  $3,$ then according to Theorem  \ref{Th_general}, the number of $(15,\sigma)$-equivalence classes is 
 $$ N= \frac{3^3}{ \lcm( \frac{3}{\gcd( [15]_1 ,3)}, \frac{3}{\gcd( 2^l[15-l]_1,3)}, \frac{3}{\gcd( 2^h[15-h]_1,3)} ) }  =
 \frac{3^3}{ \lcm( 1, \frac{3}{\gcd( 15-l,3)}, \frac{3}{\gcd( 15-h,3)} ) } .$$
 \begin{enumerate}
     \item If $l$ and $h$ are multiples of $3$ then the number if equivalence classes is $N=3^3$ and the equivalence relation has no influence in this case.
     \item Else, i.e., $l$  or   $h$ is not a multiple of $3$ then $N=\dfrac{3^3}{ 3}= 9$. Let suppose that $h$ is a multiple and $l$ not, then as  
     $$  \gcd\left( \frac{3}{\gcd( 15 ,3)}, \frac{3}{\gcd( 15-l,3)}, \frac{3}{\gcd( 15-h,3)} \right)= \gcd( 1,3, 1)= 1,$$ then by Corollary \ref{Equiv_General2}, 
 each  polycyclic codes associated with a polynomial $f$ of the form  $f =x^{15}-cx^h-bx^l-a,$ is equivalent to a polycyclic  code  associated with one of the following  polynomials: 
 $$f_{i,j}= x^{15}- \xi^i x^h-  x^l -\xi^j, \quad i,j \in \{0,1, 2\}.  $$
We again found some optimal (for given $n,k$ and $q$) polycyclic codes, which we present in  Table \ref{Tab2}.
 \begin{table}[h]
    \centering
  \small{  \begin{tabular}{|l|l|l|}
        \hline
       \textbf{Polynomials $f$} &  \textbf{Generator Polynomial} &  \textbf{Parameters} \\
        \hline
     $x^{15} + x^3 + x^2 + \xi$ &    $x^3 + x^2 + \xi^2$ &  $[15, 12, 3]_4$ \\
        \hline
       $x^{15} + \xi x^3 + x^2 + \xi$ &  $x^5 + \xi^2 x^3 + x + \xi^2$ &  $[15, 10, 4]_4$ \\
        \hline
        $x^{15} + \xi x^3 + x^2 + \xi$ &  $x^4 + x^3 + \xi x^2 + \xi x + \xi^2$ & $[15, 11, 4]_4$ \\
        \hline
        $x^{15} + x^6 + x^2 + \xi$ & $x^{10} + x^8 + \xi^2 x^7 + \xi x^6 + \xi x^5 + \xi^2 x^4 + \xi^2 x^3 + \xi x^2 + \xi x + 1$ &  $[15, 5, 8]_4$ \\
        \hline
      $x^{15} + \xi x^{12} + x^2 + 1$ &  $x^6 + \xi x^5 + x^4 + \xi x^3 + \xi x^2 + \xi x + \xi^2$ &  $[15, 9, 5]_4$ \\
        \hline
        $x^{15} + \xi^2 x^{12} + x + \xi^2$ & $x^4 + x^3 + \xi^2 x^2 + \xi^2 x + \xi$ &  $[15, 11, 4]_4$ \\
        \hline
    \end{tabular}}
 \caption{Optimal polycyclic codes of length $n=15$ over   $\mathbb{F}_{4} $.}
    \label{Tab2}
\end{table}

 \end{enumerate}

\end{example}

\section{Conclusion}\label{sec:conclusion}

In this work, we developed a framework for studying equivalence among polycyclic codes via an extension of the notion of $n$-equivalence. Our results provide explicit criteria for determining when two ambient spaces of polycyclic codes are equivalent and yield a classification of these spaces into finitely many equivalence classes. A detailed analysis was carried out for $\ell$-trinomial codes, where we derived explicit formulas for the number of $n$-equivalence classes and established concrete equivalence conditions, including a reduction to a canonical trinomial under mild arithmetic assumptions. These results were then extended to broader families of polycyclic codes, demonstrating that the proposed approach applies beyond special cases. In addition, we constructed examples of trinomial polycyclic codes with optimal parameters.

From a practical perspective, the classification obtained in this paper allows one to restrict to representatives of $n$-equivalence classes, both in future algebraic analyses and in the search for codes with good parameters. Beyond classification and parameter optimization, the study of equivalence between polycyclic code ambient spaces provides a suitable framework for understanding duality of polycyclic codes, since the dual of a polycyclic code is not necessarily polycyclic with respect to the same ambient space, but is often equivalent to a polycyclic code arising from a different ambient space \cite{Ouazzou2023,Shi2023,Aydin2022}. A systematic understanding of the equivalences of the ambient spaces may therefore lead to new results on the duality of polycyclic codes.

Several directions for further research arise naturally. One is the explicit factorization of trinomials defining polycyclic codes via their order, in analogy with the well-understood cases of cyclic and constacyclic codes. This may open the possibility of constructing infinite families of trinomial or polycyclic codes with optimal or near-optimal parameters. Another promising direction is the investigation of more general classes of isometries, such as maps of the form $\varphi_{\alpha}(x)=\alpha x^{k}$, which may lead to a finer equivalence theory and a deeper structural understanding of polycyclic codes.

In this work, we developed a framework for studying equivalence among polycyclic codes via an extension of the notion of $n$-equivalence. Our results provide explicit criteria for determining when two ambient spaces of polycyclic codes are equivalent and yield a classification of these spaces into finitely many equivalence classes.
A detailed analysis was carried out for $\ell$-trinomial codes, where we derived explicit formulas for the number of $n$-equivalence classes and established concrete equivalence conditions, including a reduction to a canonical trinomial under mild arithmetic assumptions. These results were then extended to broader families of polycyclic codes, demonstrating that the proposed approach applies beyond special cases. In addition, we constructed examples of trinomial polycyclic codes with optimal parameters.\\
From a practical perspective, the classification obtained in this paper allows one to restrict  to representatives of $n$-equivalence classes, both in future algebraic analyzes and in the search for codes with good parameters.\\
Several directions for further research arise naturally. One is the explicit factorization of trinomials defining polycyclic codes via their order, in analogy with the well-understood cases of cyclic and constacyclic codes. This may open the possibility of constructing infinite families of trinomial or polycyclic codes with optimal or near-optimal parameters. Another promising direction is the investigation of more general classes of isometries, such as maps of the form $\varphi_{\alpha}(x)=\alpha x^{k}$, which may lead to a finer equivalence theory and a deeper structural understanding of polycyclic codes.

\section*{Acknowledgments}
The authors thank Adrien Pasquereau, Abdelghefar Chibloun, and Prof.\ Nuh Aydin for their many helpful discussions and suggestions to improve this paper.

\section*{Declarations}
The first author’s research is supported by a Swiss Government Excellence Scholarship (ESKAS), no:  2024.0504.
 \bibliographystyle{abbrv}
\bibliography{Bibleo.bib}

\end{document}